\renewcommand{\qed}{\hfill \ensuremath{\square}}
\newcommand{\qedTheorem}{\hfill \ensuremath{\blacksquare}}
\renewenvironment{proof}{
\vspace*{-\parskip}\noindent\textit{Proof.}}{$\qed$

\medskip
}
\newenvironment{proofTheorem}{
\vspace*{-\parskip}\noindent\textit{Proof.}}{$\qedTheorem$

\medskip
}
\def\hexdigit#1{\ifnum#1<10 \number#1\else
\ifnum#1=10 A\else\ifnum#1=11 B\else\ifnum#1=12 C\else
\ifnum#1=13 D\else\ifnum#1=14 E\else\ifnum#1=15 F\fi
\fi\fi\fi\fi\fi\fi}
\font\tenmsam=msam10
\font\sevenmsam=msam7
\font\fivemsam=msam5
\font\tenmsbm=msbm10
\font\sevenmsbm=msbm7
\font\fivemsbm=msbm5
\def\mathbb{\fam=\msbfam\tenmsbm}
\mathchardef\N="0\hexdigit\msbfam4E
\mathchardef\R="0\hexdigit\msbfam52
\mathchardef\Z="0\hexdigit\msbfam5A
\mathchardef\F="0\hexdigit\msbfam46
\mathchardef\le="3\hexdigit\msafam36
\mathchardef\ge="3\hexdigit\msafam3E
\def   \q     {\mbox{\quad}}
\def   \mod   #1{\;({\rm mod}\;#1)}
\def   \Mod   #1{\;{\rm mod}\;#1}
\def   \ceil  #1{\left\lceil#1\right\rceil}
\def   \flr   #1{\left\lfloor#1\right\rfloor}
\def  \beas        {\begin{eqnarray*}}
\def  \eeas        {\end{eqnarray*}}
\begin{document}

\title{Efficient Proofs of Retrievability with Public Verifiability \\
for Dynamic Cloud Storage\thanks{A version of the paper with the same title has been published in 
IEEE Transactions on Cloud Computing (DOI: 10.1109/TCC.2017.2767584).
The final publication is available in \href{https://dx.doi.org/10.1109/TCC.2017.2767584}{IEEE Xplore}.}}

\author{Binanda Sengupta \and Sushmita Ruj}

\institute{%
Indian Statistical Institute, Kolkata, India\\
\email{\{binanda\_r,sush\}@isical.ac.in}}

\maketitle

\pagestyle{plain}

\begin{abstract}
Cloud service providers offer various facilities to their clients. The clients with limited resources opt
for some of these facilities. They can outsource their bulk data to the cloud server. The cloud
server maintains these data in lieu of monetary benefits. However, a malicious cloud server might delete
some of these data to save some space and offer this extra amount of storage to another client. Therefore, the client
might not retrieve her file (or some portions of it) as often as needed. Proofs of retrievability (POR)
provide an assurance to the client that the server is actually storing all of her data appropriately
and they can be retrieved at any point of time.
In a dynamic POR scheme, the client can update her data after she uploads them to the cloud server.
Moreover, in publicly verifiable POR schemes, the client can delegate her auditing task to some
third party specialized for this purpose. In this work, we exploit the \textit{homomorphic hashing} technique
to design a publicly verifiable dynamic POR scheme
that is more efficient (in terms of bandwidth required between the client and the server) than
the ``state-of-the-art'' publicly verifiable dynamic POR scheme. We also analyze security and performance
of our scheme.

\keywords{Cloud storage, auditing, proofs of retrievability, dynamic data, public verifiability}
\end{abstract}

\section{Introduction}\label{sec:intro}
In the age of cloud computing, cloud servers with adequate resources help their clients
by performing huge amount of computation or by storing large amount of data (say, in the order of terabytes)
on their behalf. In this setting, a client only has to download the result of the computation or has to read
(or update) the required portion of the outsourced data.
Several storage service providers like Amazon Simple Storage Service (S3), Dropbox, Google Drive and Microsoft Azure
provide storage outsourcing facility to their clients (data owners).
However, a cloud storage server can be malicious and delete some (less frequently accessed) part of the client's data in
order to save space. Secure cloud storage protocols (two-party protocols between the client and the server)
provide a cryptographic solution to this problem by ensuring that the client's data are stored untampered in
the cloud server.

In a secure cloud storage scheme, a client can remotely check the integrity of her data file outsourced
to the cloud server. A possible way to do this is to divide the data file into blocks and attach
an authenticator (or tag) to each of these blocks before the initial outsourcing. When the client wants
to check the integrity of her data, she downloads all the blocks of the file along with their tags from the server
and verifies them individually. However, this process is highly inefficient due to the large communication
bandwidth required between the client and the cloud server.

In order to resolve the issue mentioned above, a notion called \textit{proofs-of-storage} comes
into play where the client can audit her data file stored in the server without accessing
the whole file, and still, be able to detect an unwanted modification of the file
done by the malicious server.
The concept of \textit{provable data possession} (PDP) is introduced by Ateniese et al.~\cite{Ateniese_CCS}
where the client computes an authentication tag for each block of her data file
and uploads the file along with these authentication tags as stated earlier.
Later, the client can execute an \textit{audit} protocol and verify the integrity
of the data file by checking only a predefined number of randomly sampled blocks of the file.

Although efficient PDP schemes~\cite{Ateniese_CCS,Erway_TISSEC,Wang_TPDS,Wang_TC}
are available in the literature, they only provide the guarantee of retrievability
of \textit{almost all} blocks of the data file. We briefly mention some situations
where this guarantee is not sufficient. The data file may contain some sensitive
information (e.g., accounting information) any part of which the client does not want to
lose. On the other hand, a corruption in the compression table of an outsourced compressed file might
make the whole file unavailable. In a searchable symmetric encryption scheme~\cite{CurtmolaGKO06},
the client encrypts a database using a symmetric encryption scheme to form an index
(metadata for that database)~\cite{Goh03} and outsources the encrypted documents along with
the index to the server. The size of this index is very small compared to the encrypted
database itself. However, loss of the index completely defeats the purpose of the searchable
encryption scheme. In such circumstances, the client wants a stronger notion than PDP
which would guarantee that the server has stored the \textit{entire} file properly and
the client can retrieve \textit{all} of her data blocks at any point of time.

To address the issue mentioned above, Juels and Kaliski~\cite{JK_CCS} introduce
\textit{proofs of retrievability} (POR) where the data file outsourced to the server can be
retrieved in its entirety by the client. The underlying idea~\cite{SW_ACR} of
a POR scheme is to encode the original data file with an error-correcting code, authenticate
the blocks of the encoded file with tags and upload them on the storage server.
As in PDP schemes, the client can execute an audit protocol to check the integrity
of the outsourced data file. The use of error-correcting codes ensures that \textit{all}
data blocks of the file are retrievable.
Depending on the nature of the outsourced data, POR schemes are classified as:
POR schemes for \textit{static} data (static POR) and \textit{dynamic} data
(dynamic POR). For static data, the client cannot change her data after the initial outsourcing
(suitable mostly for backup or archival data). Dynamic data are more generic in that the client
can modify her data as often as needed.
The POR schemes are \textit{publicly verifiable} if audits can be performed by any third party auditor (TPA)
with the knowledge of public parameters only; they are \textit{privately verifiable} if only the client (data owner)
with some secret information can perform audits.

Designing an \textit{efficient} and publicly verifiable dynamic POR scheme is an important research problem due to its practical applications.
There are various efficiency parameters where the performance of a dynamic POR scheme might be improved.
Some of them include the communication bandwidth required to read or write a data block (or to perform
an audit), the client's storage, and the computational cost at the client's (or the server's) end.
On the other hand, the client often prefers to delegate the auditing task to a third party auditor (TPA)
who performs audits on the client's data and informs the client in case of any anomaly detected in
the server's behavior.

Shi et al.~\cite{Stefanov_CCS} propose two efficient dynamic POR schemes: one with
private verifiability and another with public verifiability.
In this work, we provide a construction of a publicly verifiable dynamic POR scheme that is more efficient (in terms of write and audit costs)
than the publicly verifiable scheme proposed by Shi et al.~\cite{Stefanov_CCS}.
Moreover, the public parameters used in the latter scheme need to be changed for \textit{each} write
operation performed on the client's data which is clearly an overhead as, in that case, these public parameters are to
be validated and certified for every write.

\medskip
\noindent{\bf Our Contribution}\q
We summarize our contributions in this paper as follows.
\begin{itemize}
\item We construct a \textit{dynamic} proofs-of-retrievability (POR) scheme where the client outsources
her data file to the cloud server and she can update the content of the file later.
Our construction is based on the \textit{homomorphic hashing} technique.\smallskip

\item Our dynamic POR scheme offers \textit{public verifiability}, that is, the client can delegate the auditing task
to a third party auditor who performs audits on the client's behalf. \smallskip

\item We show that our scheme is secure in the sense that the client gets an assurance
that her data file stored by the server is authentic and up-to-date, and \textit{all}
the data blocks can be retrieved by the client as often as needed. \smallskip

\item We analyze the performance of our scheme and compare it with other existing dynamic POR
schemes (having private or public verifiability). \smallskip

\item Our publicly verifiable dynamic POR scheme enjoys more efficiency (in terms of communication bandwidths
required for a write and an audit) than the ``state-of-the-art'' publicly verifiable dynamic POR scheme~\cite{Stefanov_CCS}.
Moreover, unlike the latter scheme, there is no need to validate or certify the public parameters in our scheme
for every write operation as they are fixed since the initial setup phase.
\end{itemize}

\medskip
The rest of the paper is organized as follows.
Section~\ref{prelims} describes the preliminaries and background related to our work.
In Section~\ref{scs}, we survey the existing literature on secure cloud storage schemes.
Section~\ref{scheme} provides a detailed construction of our publicly verifiable dynamic POR scheme.
We analyze the security of our scheme in Section~\ref{security}.
Finally, in Section~\ref{performance_ana}, we discuss the performance
of our scheme and compare our scheme with other existing dynamic POR schemes
based on different parameters (shown in Table~\ref{tab:comparison_POR}). We also show that our scheme is more efficient
than the publicly verifiable dynamic POR scheme proposed by Shi et al.~\cite{Stefanov_CCS}.
In the concluding Section~\ref{sec:conclusion}, we summarize the work done in this paper.

\section{Preliminaries and Background}
\label{prelims}

\subsection{Notation}
We take $\lambda$ to be the security parameter.
An algorithm denoted by $\mathcal{A}(1^\lambda)$ is a probabilistic polynomial-time
algorithm when its running time is polynomial in $\lambda$ and its output $y$
is a random variable which depends on the internal coin tosses of $\mathcal{A}$.
If $\mathcal{A}$ is given access to an oracle $\mathcal{O}$, we denote $\mathcal{A}$
by $\mathcal{A}^{\mathcal{O}}$ also.
An element $a$ chosen uniformly at random from a set $S$
is denoted as $a\xleftarrow{R}S$.
A function $f:\N\rightarrow\R$ is called negligible in $\lambda$ if for
all positive integers $c$ and for all sufficiently large $\lambda$,
we have $f(\lambda)<\frac{1}{\lambda^c}$.

\subsection{Erasure Code}
\label{erasure_code}
A $(\tilde{m},\tilde{n},d)_\Sigma$-erasure code~\cite{ErasureCode_FAST05_Tutorial,ErasureCode_FAST12} is an error-correcting code~\cite{MWSloane77}
that comprises an encoding algorithm Enc: $\Sigma^{\tilde{n}}\rightarrow\Sigma^{\tilde{m}}$
(encodes a message consisting of $\tilde{n}$ symbols into a longer codeword consisting of $\tilde{m}$ symbols) and
a decoding algorithm Dec: $\Sigma^{\tilde{m}}\rightarrow\Sigma^{\tilde{n}}$ (decodes a codeword to a message),
where $\Sigma$ is a finite alphabet and $d$ is the minimum distance
(Hamming distance between any two codewords is at least $d$) of the code.
The quantity $\frac{\tilde{n}}{\tilde{m}}$ is called the rate of the code.
A $(\tilde{m},\tilde{n},d)_\Sigma$-erasure code can tolerate up to $d-1$ erasures.
If $d=\tilde{m}-\tilde{n}+1$, we call the code a maximum distance separable (MDS) code.
For an MDS code, the original message can be reconstructed from any $\tilde{n}$
out of $\tilde{m}$ symbols of the codeword. Reed-Solomon codes~\cite{RSCode}
and their extensions are examples of non-trivial MDS codes.

\subsection{Merkle Hash Tree}
\label{MHT}
A Merkle hash tree~\cite{Merkle_CR} is a binary tree where each leaf-node stores a data item.
The label of each leaf-node is the data item stored in the node itself.
A collision-resistant hash function $h_{CR}$ is used to label the intermediate nodes of the tree.
Each of the outputs of $h_{CR}$ on different inputs is a binary string of length $O(\lambda)$.
The label of a intermediate node $v$ is the output of $h_{CR}$ computed on the labels of the children
nodes of $v$.
A Merkle hash tree is used as a standard tool for efficient memory-checking.
\begin{figure}[bp]
\scriptsize
\centering
\begin{tikzpicture}[level distance=1.2cm,
                      level 1/.style={sibling distance=4.2cm}, 
                      level 2/.style={sibling distance=1.8cm}, 
                      level 3/.style={sibling distance=1cm}, 
                      every text node part/.style={align=center}]
                    \node {A \\ $h_{CR}(h_{CR}(h_{CR}(d_1, d_2), h_{CR}(d_3, d_4)), h_{CR}(h_{CR}(d_5, d_6), h_{CR}(d_7, d_8)))$}
                        child {node {B \\ $h_{CR}(h_{CR}(d_1, d_2), h_{CR}(d_3, d_4))$}
                                child {node {D \\ $h_{CR}(d_1, d_2)$}
                                    child {node {H \\ $d_1$}}
                                    child {node {I \\ $d_2$ }}
                                }
                                child {node {E \\ $h_{CR}(d_3, d_4)$}
                                    child {node {J \\ $d_3$}}
                                    child {node {K \\ $d_4$}}
                                }
                        }
                        child {node {C \\ $h_{CR}(h_{CR}(d_5, d_6), h_{CR}(d_7, d_8))$}
                                child {node {F \\ $h_{CR}(d_5, d_6)$}
                                    child {node {L \\ $d_5$}}
                                    child {node {M \\ $d_6$}}
                                }
                                child {node {G \\ $h_{CR}(d_7, d_8)$}
                                    child {node {N \\ $d_7$}}
                                    child {node {O \\ $d_8$}}
                                }
                        };
                    \end{tikzpicture}

\caption{A Merkle hash tree containing data items $\{d_1,d_2,\ldots ,d_8\}$.}
\label{fig:MHT}
\end{figure}
Fig.~\ref{fig:MHT} shows a Merkle hash tree containing the data items $\{d_1,d_2,\ldots,d_8\}$ 
stored at the leaf-nodes. Consequently, the labels of the intermediate nodes are computed
using the hash function $h_{CR}$. The hash value of the node $A$ is the \textit{root-digest}.
The proof showing that a data item $d$ is present in the tree consists of the data item $d$ and
the labels of the nodes along the \textit{associated path} (the sequence of siblings of the node
containing the data item $d$). For example, a proof showing that $d_3$ is present in the tree
consists of $\{d_3,(d_4,l_D,l_C)\}$, where $d_4,l_D$ and $l_C$ are the labels of the nodes
$K,D$ and $C$, respectively. Given such a proof, a verifier computes the hash value of the root.
The verifier outputs \texttt{accept} if the computed hash value matches with the root-digest;
it outputs \texttt{reject}, otherwise. The size of a proof is logarithmic in the number of data items
stored in the leaf-nodes of the tree.

Due to the collision-resistance property of $h_{CR}$, it is infeasible (except with some
probability negligible in the security parameter $\lambda$) to add or modify a data item
in the Merkle hash tree without changing its root-digest.

\subsection{Digital Signature Scheme}
\label{dig_sig}
Diffie and Hellman introduce the public-key cryptography and the notion
of digital signatures in their seminal paper ``New Directions in
Cryptography''~\cite{DH_ITIT}. Rivest et al.~propose the first digital signature scheme
based on the RSA assumption\cite{RSA_CACM}.
Boneh et al.~\cite{BLS_JOC} introduce the first signature scheme
where the signatures are short (e.g., such a signature of size 160 bits provides the security comparable to
that of a 1024-bit RSA signature).
The DSA (Digital Signature Algorithm)~\cite{DSA} and ECDSA (Elliptic Curve Digital Signature Algorithm)~\cite{ECDSA}
signature schemes (variants of the ElGamal signature scheme~\cite{ElGamal_CR}) are widely used in practice.

A digital signature scheme
consists of the following polynomial-time algorithms:
a key generation algorithm KeyGen, a signing algorithm Sign
and a verification algorithm Verify. KeyGen takes as input the security parameter
$\lambda$ and outputs a pair of keys $(psk,ssk)$, where $ssk$ is the
secret key and $psk$ is the corresponding public verification key. Algorithm
Sign takes a message $m$ from the message space $\mathcal{M}$
and the secret key $ssk$ as input
and outputs a signature $\sigma$.
Algorithm Verify takes as input the public key $psk$, a message $m$
and a signature $\sigma$, and outputs \texttt{accept} or \texttt{reject}
depending upon whether the signature is valid or not.
Any of these algorithms can be probabilistic in nature.
The correctness and security (existential unforgeability under adaptive
chosen message attacks~\cite{GMR_ACM}) of a digital signature scheme are described as follows.
\begin{enumerate}
 \item \textit{Correctness}:\q Algorithm Verify always accepts a signature generated by an honest signer, that is,
	 \begin{align*}
	  \Pr[\text{Verify}_{psk}(m,\text{Sign}(ssk,m))=\texttt{accept}]=1.
	 \end{align*}
 \item \textit{Security}:\q Let Sign$_{ssk}(\cdot)$ be the signing oracle and $\mathcal{A}$
	 be any probabilistic polynomial-time adversary with an oracle access to Sign$_{ssk}(\cdot)$.
	 The adversary $\mathcal{A}$ adaptively makes polynomial
	 number of sign queries to Sign$_{ssk}(\cdot)$ for different messages and gets back the signatures
	 on those messages. The signature scheme is secure if $\mathcal{A}$ cannot produce,
	 except with some probability negligible in $\lambda$,
	 a valid signature on a message not queried previously, that is, for
	 any probabilistic polynomial-time adversary $\mathcal{A}^{\text{Sign}_{ssk}(\cdot)}$,
	 the following probability
	 \beas
	  \Pr[(m,\sigma)\leftarrow\mathcal{A}^{\text{Sign}_{ssk}(\cdot)}(psk):
	  {m\not\in Q_{s}} \wedge \text{Verify}_{psk}(m,\sigma)=\texttt{accept}]
	 \eeas
	 is negligible in $\lambda$, where $Q_{s}$ is the set of sign queries made by
	 $\mathcal{A}$ to $\text{Sign}_{ssk}(\cdot)$.
	 The probability is taken over the internal coin tosses of $\mathcal{A}$.
\end{enumerate}

\subsection{Discrete Logarithm Assumption}
\label{disLog}
The discrete logarithm problem~\cite{Dlog_McCurley,Dlog_Bellare} over a multiplicative group $G_q=\langle g \rangle$ of prime order $q$ and generated by $g$
is defined as follows.
\begin{definition}[Discrete Logarithm Problem]
Given $y\in G_q$, the discrete logarithm problem over $G_q$ is to compute $x\in\Z_q$
such that $y=g^x$.
\end{definition}
The discrete logarithm assumption over $G_q$ says that, for any probabilistic polynomial-time
adversary $\mathcal{A}(1^\lambda)$, the probability
\beas
\Pr_{x\xleftarrow{R}\Z_q}[x\leftarrow\mathcal{A}(y): y=g^x]
\eeas
is negligible in $\lambda$, where the probability is taken over the internal
coin tosses of $\mathcal{A}$ and the random choice of $x$.

\begin{figure}[t]
\centering
\includegraphics[width=.4755\textwidth]{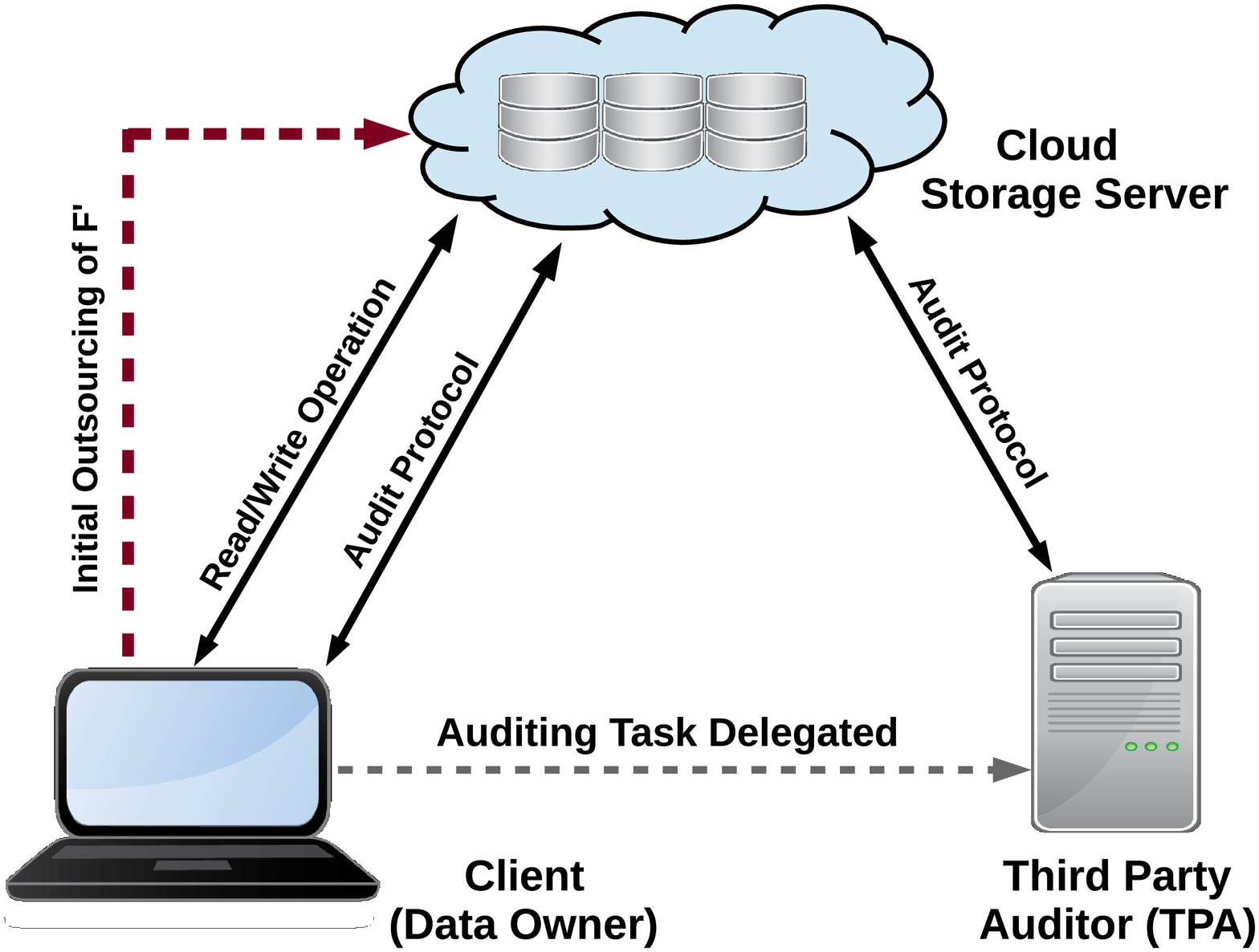}
\caption{The entities involved in a dynamic POR scheme.
	 The client (data owner) processes the data file $F$ to form another file $F'$ and outsources it to the cloud storage server.
	 She can later read or write the outsourced data. For a privately verifiable scheme,
	 the client performs audits on her data.
	 For a publicly verifiable scheme, she can delegate the auditing task to a third party auditor who performs audits
	 on behalf of the client.}
	 \label{fig:dpor}
\end{figure}

\subsection{Dynamic Proofs of Retrievability}
\label{dpor_def}
We define a proofs-of-retrievability scheme for \textit{dynamic} data as follows~\cite{Wichs_ORAM,Stefanov_CCS}.
\begin{definition}[Dynamic POR]
A dynamic POR scheme consists of the following protocols between two stateful parties: a client (data owner) and a server.
\begin{itemize}
\item \emph{Init($1^\lambda,n,\beta,F$):} This protocol associates a random file-identifier \emph{\texttt{fid}}
to the data file $F$ consisting of $n$ data blocks each of $\beta$ bits, and it outputs the client
state $state_C$ and another file $F'$ to be stored by the server.

\item \emph{Read($i,F',state_C,\texttt{fid}$):} This protocol outputs the data block at the $i$-th location of
the current state of the file or \emph{\texttt{abort}}.

\item \emph{Write($i,\texttt{updtype},B,F',state_C,\texttt{fid}$):} This protocol inserts the block $B$ after the $i$-th block of the file
or sets $i$-th block of the file to $B$ or deletes the $i$-th block of the file ($B$ is \emph{\texttt{null}}
in this case) based on the value of the variable \emph{\texttt{updtype}}. 
It outputs updated $(\tilde{F}',\widetilde{state}_C)$ or \emph{\texttt{abort}}.

\item \emph{Audit($F',state_C,\texttt{fid}$):} This protocol checks memory contents of the current state of the data file
and outputs 1 or 0.

\end{itemize}
\end{definition}
A dynamic POR scheme is \textit{privately verifiable} if only the client with some secret information can perform an audit,
that is, $state_C$ is secret. Otherwise, it is \textit{publicly verifiable}.
For a publicly verifiable dynamic POR scheme, a third party auditor (TPA) can audit
the client's data on behalf of the client who delegates her auditing task to the TPA. In this case, we use the term ``verifier''
to denote an auditor who can be the TPA or the client herself.
Fig.~\ref{fig:dpor} shows the entities involved in a dynamic POR scheme.
Security of a dynamic POR scheme is described
in Section~\ref{security}.

\subsection{Homomorphic Hash Function}
\label{hom_hash}
A homomorphic hash function $h: \F^m\rightarrow G_q$
(for a finite field $\F$ and a multiplicative group $G_q$ of prime order $q$)
is defined as a collision-resistant hash function satisfying the following two properties:
1) for vectors $\hbox{u},\hbox{v}\in\F^m$ and scalars $\alpha,\beta\in\F$,
it holds that $h(\alpha\hbox{u} + \beta\hbox{v}) = h(\hbox{u})^{\alpha}\cdot h(\hbox{v})^{\beta}$,
and 2) it is computationally hard to find vectors $\hbox{u},\hbox{v}\in\F^m$ ($\hbox{u}\not=\hbox{v}$)
such that $h(\hbox{u})=h(\hbox{v})$.

Krohn et al.~\cite{Krohn_SP} construct a homomorphic hash function in the context of content distribution.
The construction is similar to that proposed in incremental hashing schemes~\cite{IncCrypto_CR}.
Let $G_q$ be a multiplicative group of prime order $q$.
Let $m$ elements (generators) $g_1,g_2,\ldots,g_m$ be selected randomly from $G_q$.
Then, the homomorphic hash of a vector $\hbox{u}=[u_{1},u_{2},\ldots,u_{m}]\in \Z_q^m$ is defined as
  $h(\hbox{u})=\prod_{i=1}^{m}{g_i}^{u_i}$.
The hash function thus constructed is homomorphic, and
the collision-resistance property is derived from the discrete logarithm assumption over $G_q$.
We use this construction in our dynamic POR scheme to generate
authentication tags for data blocks (see Section~\ref{tag_gen}).

\section{Related Work}
\label{scs}

Ateniese et al.~\cite{Ateniese_CCS} introduce the notion of \textit{provable data possession} (PDP).
In a PDP scheme, the client computes an authentication tag (e.g., message authentication code~\cite{CBC_MAC}) for each block
of her data file and uploads the file along with the authentication tags. During an audit protocol,
the client samples a predefined number of random block-indices and sends them to the server
(\textit{challenge} phase). The cardinality of the challenge set is typically taken to be $O(\lambda)$,
where $\lambda$ is the security parameter. Depending upon the challenge, the server does some computations
over the stored data and sends a proof to the client (\textit{response} phase). Finally, the client checks
the integrity of her data based on this proof (\textit{verification} phase).
\textit{Almost all} data blocks can be retrieved from a (possibly malicious) server
passing an audit with a non-negligible probability.
Other PDP schemes include~\cite{Ateniese_SCOM,Erway_TISSEC,Wang_TPDS,Wang_TC,Curtmola_ICDCS}.

Juels and Kaliski~\cite{JK_CCS} introduce \textit{proofs of retrievability} (POR) for static data
(Naor and Rothblum~\cite{NR_JACM} give a similar idea for sublinear authenticators).
According to Shacham and Waters~\cite{SW_ACR}, the retrievability guarantee for \textit{all}
data blocks of the outsourced file can be achieved by encoding the original file with an
erasure code (see Section~\ref{erasure_code}) before authenticating (and uploading) the blocks
of the encoded file. Due to the redundancy added to the data blocks, the server has to delete
or modify a considerable number of blocks to actually delete or modify a data block which
makes it difficult for the server to pass an audit.

Following the work by Juels and Kaliski,
several POR schemes have been proposed for static data (\textit{static POR}) and dynamic data (\textit{dynamic POR}).
Shacham and Waters~\cite{SW_ACR} propose two POR schemes for static data (one with private verifiability and another with
public verifiability) where the response from the server is short.
Bowers et al.~\cite{Bowers_CCSW} propose a theoretical framework for designing POR schemes
and provide a prototype implementation of a variant of~\cite{JK_CCS}. In another work,
Bowers et al.~\cite{Bowers_HAIL} introduce HAIL (High-Availability and Integrity Layer),
a distributed POR setting where the client's data are disseminated across multiple servers.
Dodis et al.~\cite{Wichs_HA} introduce a notion called ``POR codes'' and show how POR schemes can be
instantiated based on these POR codes. They explore the connection between POR codes and the notion
of hardness amplification~\cite{Hard_Amp}.
Xu and Chang~\cite{Xu_ASIACCS} improve the privately verifiable scheme of~\cite{SW_ACR}
by making the communication bandwidth required for an audit to be $O(\lambda)$.
Armknecht et al.~\cite{Outpor_CCS} propose a POR scheme where any entity among the client (data owner),
the auditor and the cloud server can be malicious, and any two of them can collude as well.
The auditor performs two audits: one for the auditor itself and another on behalf
of the client. The challenge sets are generated using a public randomized algorithm derived
from the hash value of the latest block added to the Bitcoin block chain~\cite{Nakamoto08}.

A few dynamic POR schemes are there in the literature.
Stefanov et al.~\cite{IRIS} propose an authenticated file system called ``Iris''
that is highly scalable and resilient against a malicious cloud server.
Cash et al.~\cite{Wichs_ORAM} encode a small number of data blocks \textit{locally}
and hide the access pattern of the related (belonging to the same codeword) blocks
from the server using oblivious RAM (ORAM)~\cite{ORAM_JACM}. Due to the use of
expensive ORAM primitives, this scheme is inefficient.
Shi et al.~\cite{Stefanov_CCS} propose two practical dynamic POR schemes which reduce
the cost of computation as well as the communication bandwidth required to execute
the protocols involved.
Chandran et al.~\cite{Bhavana_TCC} introduce the notion of ``locally updatable and
locally decodable codes'' and propose an efficient dynamic POR scheme by
applying the techniques used in the construction of such a code.
Ren et al.~\cite{Ren_TSC15} propose a dynamic POR scheme for multiple storage servers
where the data file is split into data blocks and each of these data blocks is encoded
using intra-server (erasure coding) and inter-server (network coding) redundancy.
An update in a block requires changing only a few codeword symbols.
Moreover, the inter-server redundancy achieved using network coding reduces
the repair bandwidth required in case any of the servers fails.
The POR scheme by Guan et al.~\cite{Guan_ESORICS} exploits
the privately verifiable scheme of~\cite{SW_ACR} and gives
a publicly verifiable scheme with the help of indistinguishability obfuscation
($i\mathcal{O}$)~\cite{BarakIO_CR,GargIO_FOCS}.

\section{Dynamic POR Scheme with Public Verifiability}
\label{scheme}

In this section, we describe our publicly verifiable dynamic POR scheme with efficient
writes and audits.
Like the existing dynamic POR schemes~\cite{Wichs_ORAM,Stefanov_CCS}, our
construction also rely on the hierarchical structure provided by the
oblivious RAM~\cite{ORAM_JACM}. Specifically, we follow a storage
structure similar to the one proposed by Shi et al.~\cite{Stefanov_CCS}.
However, our construction is more efficient than their scheme in terms
of the cost of a write operation and the cost of an audit.
Our construction is based on \textit{collision-resistant homomorphic hashing}
technique~\cite{Krohn_SP,IncCrypto_CR} along with a digital signature scheme.
To the best of our knowledge, the homomorphic hashing technique has not been used
before in the context of POR schemes.

\subsection{Storage Structure for Data Blocks}
\label{storage_blocks}
Our scheme relies on a storage structure similar to that proposed by Shi et al.~\cite{Stefanov_CCS}.
Let the client (data owner) associate a random file-identifier \texttt{fid} of $\lambda$ bit-size to the data file she
wants to outsource to the cloud server. We assume that the data file is divided into blocks of size
$\beta$ bits, and read (and write) operations are performed on these blocks.
The value of $\beta$ is taken to be $\flr{\log\tilde{p}}$ for a large prime $\tilde{p}$. The way this prime
$\tilde{p}$ is selected is discussed in Section~\ref{buff_H}. For static data, a standard way
to guarantee retrievability of the file is to encode the file using an erasure code~\cite{SW_ACR}.
The main drawback of using erasure codes in \textit{dynamic} POR is that an update in a single block
in a codeword (say, C) is followed by updates on other $O(n)$ blocks in C,
where $n$ is the number of blocks being encoded to form C.
The underlying idea to overcome this drawback is not to update the encoded copy (C)
for every write operation (insertion, deletion or modification). Instead, it is updated (or rebuilt) only when
sufficient updates are done on the data file. Thus, the amortized cost for writes is
reduced dramatically. However, this encoded copy stores stale data between two such rebuilds.
Therefore, a hierarchical log structure is maintained which temporarily
stores values for the intermediate writes between two successive rebuilds of C.
Each level of this hierarchical log
is also encoded using an erasure code.

We adopt the storage structure and code construction mentioned above in our scheme.
However, we use collision-resistant homomorphic hashing to construct another hierarchical
storage (discussed in Section~\ref{storage_tags}) in order to reduce the cost of a write
and an audit for the client.
Our scheme involves the following three data structures in order to store the data blocks
of the client's file:
\begin{itemize}
\item an \textit{unencoded} buffer U containing all the up-to-date data blocks of the file
(that is, U is updated after every write operation is performed),

\item an \textit{encoded} buffer C which is updated after every $n$ writes
(that is, C is rebuilt afresh by encoding the latest U after every $n$ write operations), and

\item an
\textit{encoded} hierarchical (multiple levels of buffers) log structure H which accommodates
all intermediate writes between two successive rebuilds of C (H is made empty after every $n$ writes).
\end{itemize}
We note that \textit{all} of these data structures are stored on the cloud server.
The server also stores two additional data structures, $\tilde{\text{H}}$ and $\tilde{\text{C}}$
(similar to H and C, respectively), for authentication tags described in Section~\ref{storage_tags}.

\subsubsection{Structure of Buffer U}
\label{buff_U}
The buffer U contains an up-to-date copy of the data file. Reads and writes are
performed directly on the required locations of U. A Merkle hash tree is maintained
over the data blocks of U to check the authenticity of the read block (see Section~\ref{MHT} for the description
of a Merkle hash tree). The Merkle proof sent by the server along with the read block
is verified with respect to the up-to-date root-digest (say, $digMHT$) of the Merkle hash tree.
One can also use other authenticated data structures
like rank-based authenticated skip lists~\cite{Erway_TISSEC} instead of a Merkle hash tree.
Let $n$ be the number of blocks the client outsources to the cloud server initially. So the height
of the Merkle tree built on U is $\ceil{\log n}$.
Read and write operations on the buffer U are described in details in Section~\ref{operations}.

\subsubsection{Structure of Hierarchical Log H}
\label{buff_H}
A hierarchical log structure H is maintained that consists of $(k+1)$ levels
$\text{H}_0,\text{H}_1,\ldots,\text{H}_k$, where $k=\flr{\log n}$.
The log structure H stores the intermediate writes temporarily. For each $0\le l\le k$,
the $l$-th level H$_l=(X_l,Y_l)$ consists of an encoded copy of $2^l$ data blocks using
a $(2^{l+1},2^l,2^l)$-erasure code, where $X_l$ and $Y_l$ contain $2^l$ blocks each.
The original data blocks encoded in H$_l$ arrive at time $t,t+1,\ldots,t+2^l-1\mod n$,
where $t$ is a multiple of $2^l$. We describe the encoding procedure as follows.

Let $\tilde{p}$ be a large prime such that $\tilde{p}=\alpha\cdot(2n)+1$ for some $\alpha\in\N$ and
the bit-size of a block $\beta=\flr{\log\tilde{p}}$, where $\beta\gg\lambda$. Let $\tilde{g}$ denote a generator of
$\Z_{\tilde{p}}^*$. Then, $\omega=\tilde{g}^\alpha\Mod\tilde{p}$ is a $2n$-th primitive root of unity modulo $\tilde{p}$.
When a block $B$ is written to H, it is inserted in the topmost level ($l=0$) if H$_0$ is empty.
That is, $X_0$ is set to $B$. In addition, $Y_0$ is set to $B\cdot\omega^{\psi(t)}$ for the $t$-th ($\Mod n$)
write, where $\omega$ is the $2n$-th primitive root of unity modulo $\tilde{p}$.
Here, $\psi(\cdot)$ is the bit reversal function, where $\psi(t)$ is the value of the binary string
obtained by reversing the binary representation of $t$.

If the top $i$ levels $\text{H}_0,\text{H}_1,\ldots,\text{H}_{i-1}$ are already full,
a \textit{rebuild} is performed to accommodate all the blocks in these levels
as well as the new block in $\text{H}_{i}$ (and to make all the levels
up to $\text{H}_{i-1}$ empty). Shi et al.~\cite{Stefanov_CCS} employ a fast incrementally constructible code based on
Fast Fourier Transform (FFT)~\cite{FFT}\footnote{We can use any linear-time encodable and
decodable error-correcting code~\cite{Spielman} instead of the FFT-based code. However, as we
compare the performance of our scheme with that of the ``state-of-the-art'' publicly verifiable
scheme of~\cite{Stefanov_CCS} in Section~\ref{performance_ana}, we use similar code and parameters
for the ease of comparison.
}.
Fig.~\ref{fig:rebuildX} describes the algorithm
for rebuild of $X_l$ that in turn uses the algorithm \texttt{mix} shown in Fig.~\ref{fig:mixX}.
Although the algorithm deals with $X_l$, the same algorithm can be used for rebuilding $Y_l$
if we replace the $X$ arrays by corresponding $Y$ arrays and the incoming block $B$ by
$B\cdot\omega^{\psi(t)}$. We refer~\cite{Stefanov_CCS} for the form of the $(2^l\times 2^{l+1})$
generator matrix $G_l$ for the $l$-th level code. Let $\tilde{\hbox{x}}_l$ be the vector
containing $2^l$ data blocks most recently inserted in H (after applying a permutation).
Then, the output of the algorithm \texttt{mix} for H$_l$ is the same as that when
$\tilde{\hbox{x}}_l$ is multiplied by $G_l$.
Any $(2^l\times 2^l)$ submatrix of the generator matrix $G_l$ is full rank, and thus,
the code achieves the maximum distance separable (MDS) property.

As a concrete example, the rebuild of $X_3$ is demonstrated in Fig.~\ref{fig:rebuildH}.
The other part of H$_3$ (that is, $Y_3$) is rebuilt in a similar fashion.
We observe that, by using this code, the rebuild cost of H$_l$ is $O(\beta\cdot 2^l)$
(i.e., linear in the length of H$_l$) since the algorithm \texttt{mix} populates H$_l$ by combining two arrays
$\text{H}_{l-1},\text{H}'_{l-1}\in\Z_{\tilde{p}}^{2^{l-1}}$ (see Fig.~\ref{fig:mixX} and Fig.~\ref{fig:rebuildH}).
The $l$-th level is rebuilt after $2^l$ writes.
Therefore, the amortized cost for rebuilding is $O(\beta\log n)$ per write operation.
Each rebuild of the buffer C (discussed in Section~\ref{buff_C})
is followed by making all levels of H empty.

\begin{figure}[t]
\fbox{
\begin{minipage}[t]{.975\textwidth}
\renewcommand{\labelitemi}{$\bullet$}
  \begin{center}
   \textbf{Rebuild algorithm for $X_l$ to accommodate $B$ in H}
  \end{center}
  \textbf{Input}: Already full levels $X_0,X_1,\ldots,X_{l-1}$ and empty $X_l$.\newline
  \textbf{Output}: Empty levels $X_0,X_1,\ldots,X_{l-1}$ and rebuilt $X_l$.
  \begin{itemize}
   \item $X'_1\leftarrow\texttt{mix}(X_0,B,0)$
   \item \textbf{For} $i=1$ to $l-1$ \textbf{do}\newline
	 \q$X'_{i+1}\leftarrow\texttt{mix}(X_i,X'_i,i)$
   \item Make $X_0,X_1,\ldots,X_{l-1}$ empty and output $X_l=X'_l$
  \end{itemize}
\end{minipage}}
\caption{Rebuild algorithm for $X_l$.}\label{fig:rebuildX}
\end{figure}
\begin{figure}[t]
\fbox{
\begin{minipage}[t]{.975\textwidth}
\renewcommand{\labelitemi}{$\bullet$}
  \begin{center}
   \textbf{Algorithm} $\texttt{mix}(A_0,A_1,l)$
  \end{center}
  \textbf{Input}: Two arrays $A_0,A_1\in\Z_{\tilde{p}}^{2^l}$.\newline
  \textbf{Output}: Array $A$ of length $2^{l+1}$.
  \begin{itemize}
   \item Let $\omega_l=\omega^{2n/2^{l+1}}$ be the $2^{l+1}$-th primitive root of unity modulo $\tilde{p}$
   \item \textbf{For} $i=0$ to $2^l-1$ \textbf{do}
	 \begin{align}
	  A[i] & \leftarrow A_0[i]+\omega_l^iA_1[i]\mod {\tilde{p}}\\
	  A[i+2^l] & \leftarrow A_0[i]-\omega_l^iA_1[i]\mod {\tilde{p}}
	  \end{align}
   \item Output $A$
  \end{itemize}
\end{minipage}}
\caption{Algorithm \texttt{mix} for two arrays $A_0$ and $A_1$.}\label{fig:mixX}
\end{figure}

\begin{figure*}[htbp]
\centering
\fbox{\includegraphics[width=.46\textwidth]{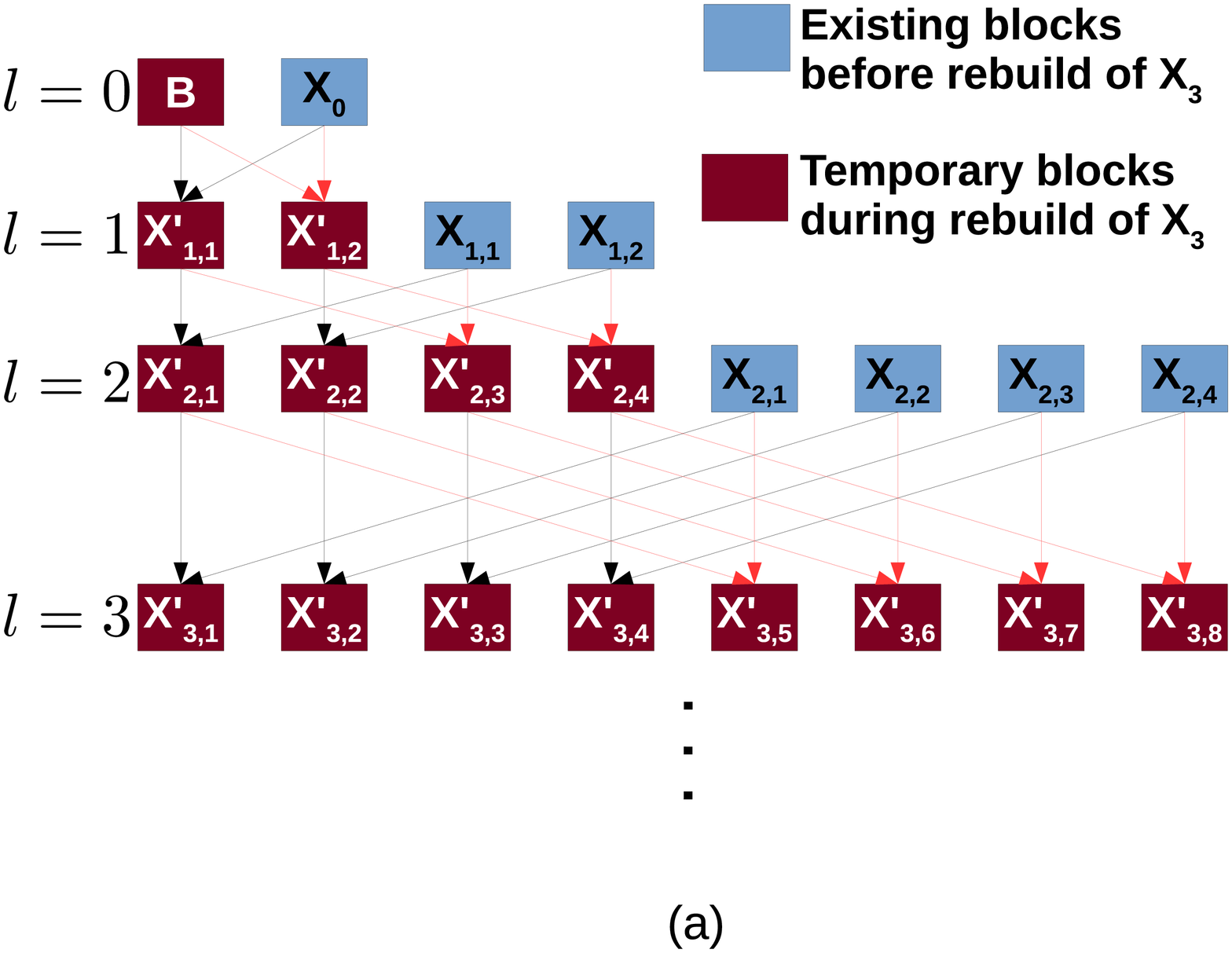}}
\qquad
\fbox{\includegraphics[width=.46\textwidth]{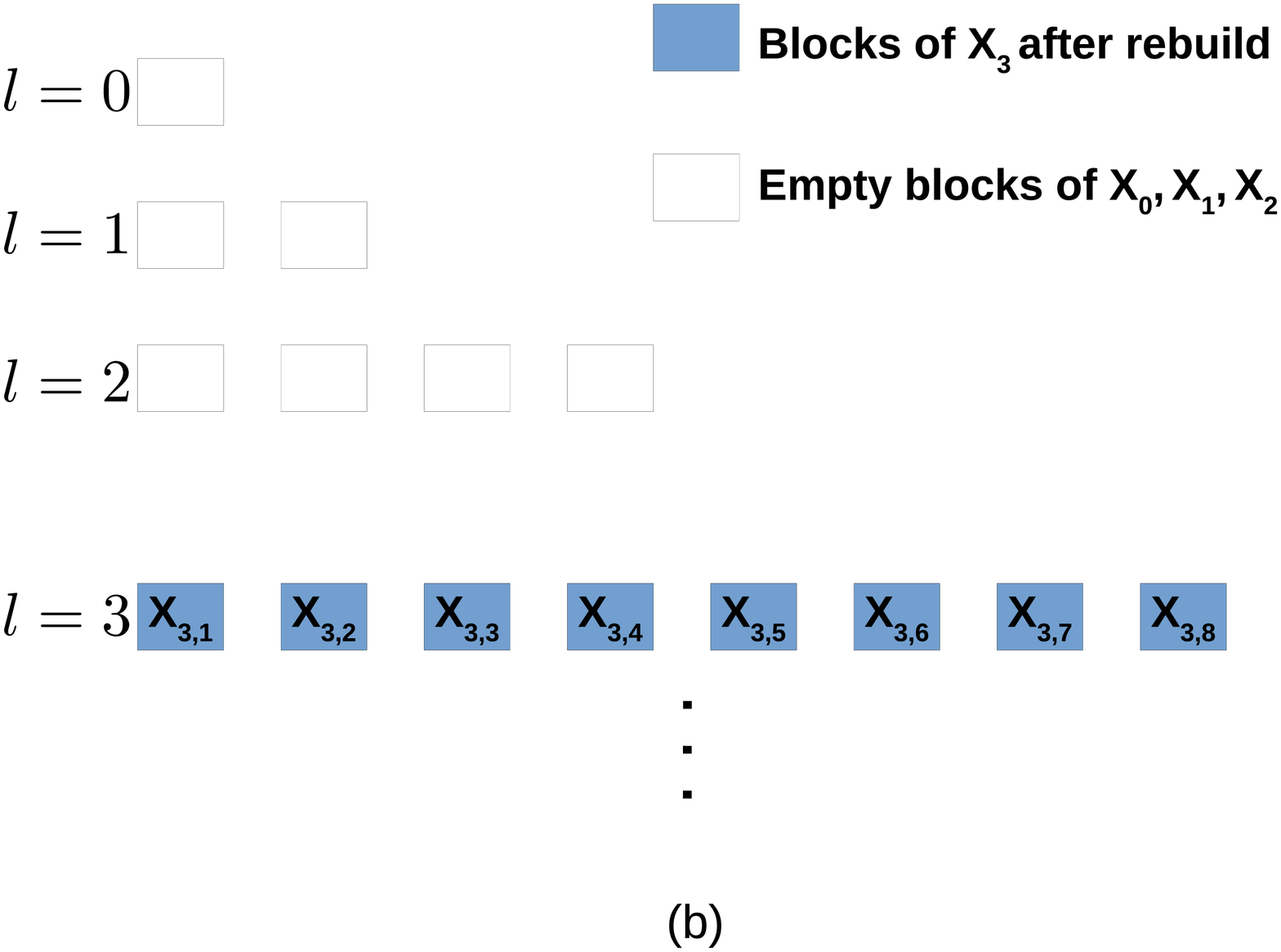}}
\caption{Rebuild process for $X_3$. (a) Initially, the levels $X_0,X_1$ and $X_2$ are already full, and $X_3$ is empty.
	The rebuild algorithm starts after a new block $B$ arrives. It forms temporary levels $X'_1,X'_2$ and $X'_3$
	using the algorithm \texttt{mix}. Linear mixing of blocks are shown using black arrows (Eqn.~1) and red arrows (Eqn.~2).
	(b) Finally, the levels $X_0,X_1$ and $X_2$ are made empty, and $X_3=X'_3$ is the newly rebuilt level.}\label{fig:rebuildH}
\end{figure*}

\subsubsection{Structure of Buffer C}
\label{buff_C}
Unlike the buffer U (and H), no read or write operations are performed directly on the buffer C.
After $n$ write operations, the buffer U is encoded using an erasure code to form a new copy of C,
and the existing copy of C is replaced by this new one. The rebuild of C can be done using the same
FFT-based code discussed in Section~\ref{buff_H} which costs $O(\beta n\log n)$ both in time and bandwidth.
As C is rebuilt after every $n$ write operations, the cost incurred per write is $O(\beta\log n)$.
We note that C contains stale data between two successive rebuilds. However, the intermediate writes
are accommodated in H with appropriate encoding. Thus, these blocks written between two successive
rebuilds of C are also retrievable at any point of time.

\subsection{Storage Structure for Authentication Tags Corresponding to Data Blocks}
\label{storage_tags}
We note that each data block in U, H and C
is of size $\beta=\flr{\log\tilde{p}}$ bits for a large prime $\tilde{p}=\alpha\cdot(2n)+1$ for some $\alpha\in\N$.
Thus, the size of a data block $\beta\gg\lambda$, where $\lambda$ is the security parameter.
For example, $\beta$ is taken to be 64 KB and $\lambda$ is taken to be 128 bits in our scheme (see Table~\ref{tab:parameters} in Section~\ref{performance_ana}).
In addition to the log structure H and the buffer C, two similar
structures $\tilde{\text{H}}$ and $\tilde{\text{C}}$ for authentication tags corresponding
to the blocks in H and C (respectively) are stored on the cloud server. Thus, the server
stores U, H, C, $\tilde{\text{H}}$ and $\tilde{\text{C}}$. The benefits of storing
$\tilde{\text{H}}$ and $\tilde{\text{C}}$ on the server are as follows.

Let us assume that the authentication tags for data blocks have the following properties.
\begin{enumerate}
 \item The size of a tag is much less than that of a block.
 \item The tags are homomorphic in the sense that, given the tags of two blocks $B_1$ and $B_2$,
       the tag on any linear combination of $B_1$ and $B_2$ can be generated.
\end{enumerate}
We note that the fundamental operation for a write (or rebuild) on H and C is to encode data blocks, that is,
to compute a linear combination of those data blocks (see Eqn.~1 and Eqn.~2 in Fig.~\ref{fig:mixX}).
Due the \textit{second} property mentioned above, while the server itself performs write operations on H and C,
the client (data owner) can perform similar operations on $\tilde{\text{H}}$ and $\tilde{\text{C}}$.
On the other hand, due to the \textit{first} property, the bandwidth required between the client and
the server for a write operation decreases significantly as the client now has to download much
smaller tags instead of large data blocks. The cost of storage is less nowadays, whereas bandwidth is
more expensive and often limited. Therefore, it is reasonable if we trade the storage off to reduce
the communication bandwidth between the client and the server required for a write (or rebuild).

Indeed, the authentication tags (described in the following section) we use in our dynamic POR scheme satisfy the following properties.
\begin{itemize}
 \item The size of a tag is $O(\lambda)$ and is independent of the size of a data block
       $\beta$, where $\lambda\ll\beta$.
 \item The homomorphic property is achieved by using a collision-resistant homomorphic hash function.
\end{itemize}
Apart from efficient write operations, the cost of an audit in our \textit{publicly verifiable dynamic} POR scheme is comparable
to that in the privately verifiable scheme of~\cite{Stefanov_CCS}, and it is \textit{much less} than that
in the publicly verifiable scheme discussed in the same work.

\subsubsection{Generation of Authentication Tags}
\label{tag_gen}

\smallskip
\noindent
\textbf{Setup}\q
For the data file identified by \texttt{fid}, the client runs an algorithm Setup($1^\lambda$)
to set parameters for generating authentication tags.
The algorithm Setup selects two random primes $p$ and $q$ such that $|q|=\lambda_q=2\lambda+1$,
$|p|=\lambda_p=O(\lambda)$ and $q|(p-1)$. Now, it divides each block $B$ of the data file into segments
of size $(\lambda_q-1)$ bits each. This ensures that each segment is less than $q$ and can
therefore be represented as an element of $\Z_q$. Thus, $m=\ceil{\beta/(\lambda_q-1)}$
is the number of segments in a block, where a block is $\beta=\flr{\log\tilde{p}}$ bits long.
In this setting, each block $B$ can be represented as a vector $[b_{1},b_{2},\ldots,b_{m}]\in \Z_q^m$.

Let $G_q$ be a subgroup of $\Z_p^*$ having order $q$. That is, $G_q$ consists of the order $q$
elements in $\Z_p$. Then, $m$ random elements $g_1,g_2,\ldots,g_m\xleftarrow{R}G_q$ are selected.
Let $\mathcal{S}=(\text{KeyGen}, \text{Sign}, \text{Verify})$ be a digital signature scheme (see Section~\ref{dig_sig})
where the algorithm Sign takes messages in $\{0,1\}^*$ as input and outputs signatures
of size $O(\lambda)$ bits each. Let the pair of signing and verification keys for
$\mathcal{S}$ be $(ssk,psk)$.
The Setup algorithm outputs the primes $p$ and $q$, the secret key $SK=ssk$,
the public parameters $PK=(g_1,g_2,\ldots,g_m,\texttt{fid},psk)$, and the descriptions of $G_q$ and $\mathcal{S}$.

\medskip
\noindent
\textbf{Format of a Tag}\q
The client computes the homomorphic hash~\cite{Krohn_SP,IncCrypto_CR} on a block $B=[b_{1},b_{2},\ldots,b_{m}]\in \Z_q^m$ as
\begin{align}\label{eqn:tag}
 h(B)=\prod\limits_{i=1}^{m}{g_i}^{b_i} \Mod p.
\end{align}
Using the signature scheme $\mathcal{S}$, the client generates
the final authentication tag for the block $B$ as
\begin{align*}
 \tilde{h}(B)=(h(B),\text{Sign}_{ssk}(h(B),\texttt{fid},\texttt{addr},t)),
\end{align*}
where \texttt{addr} is the physical address $B$ is written to (at time $t$)
and \texttt{fid} is the file-identifier of the data file the block $B$ belongs to.

\medskip
\noindent
\textbf{Collision-resistance and Homomorphic Properties}\q
As shown in~\cite{IncCrypto_CR,Krohn_SP}, given that the discrete logarithm assumption (see Section~\ref{disLog}) holds in $G_q$,
it is computationally hard to find two blocks $B_1$ and $B_2$ such that $B_1\not=B_2$ and $h(B_1)=h(B_2)$
(\textit{collision-resistance} property).

On the other hand, given $B_1=[b_{11},b_{12},\ldots,b_{1m}]\in \Z_q^m$ and
$B_2=[b_{21},b_{22},\ldots,b_{2m}]\in \Z_q^m$, any linear combination of $B_1$ and $B_2$ can be written as
$B=\alpha_1 B_1+\alpha_2 B_2=[\alpha_1 b_{11}+\alpha_2 b_{21},\alpha_1 b_{12}+\alpha_2 b_{22},\ldots,\alpha_1 b_{1m}+\alpha_2 b_{2m}]\in \Z_q^m$.
Therefore, $h(B)$ can be computed as
$\prod_{i=1}^{m}{g_i}^{\alpha_1 b_{1i}+\alpha_2 b_{2i}} \Mod p=h(B_1)^{\alpha_1}\cdot h(B_2)^{\alpha_2}$
(\textit{homomorphic} property).

\medskip
\noindent
\textbf{Size of a Tag}\q
The size of an authentication tag $\tilde{h}(B)$ is the sum of $|h(B)|$ (which is $\lambda_p=O(\lambda)$ bits)
and the size of a signature in $\mathcal{S}$.
If we use the standard ECDSA signature scheme~\cite{ECDSA}
as $\mathcal{S}$, then a signature is $4\lambda$ bits long\footnote{To reduce the size of
a tag, we can use short signatures of size $2\lambda$ bits~\cite{BLS_JOC}. However, the verification of a signature
is more expensive due to computation of bilinear pairings~\cite{Galbraith_DAM}.}.
Thus, $|\tilde{h}(B)|$ is also $O(\lambda)$ bits.
For the values of different parameters considered in our scheme
(see Table~\ref{tab:parameters} in Section~\ref{performance_ana}),
the size of a tag is only 192 bytes which is very small compared to the size of a block (64 KB).

\medskip
\noindent
\textbf{Improvement in Cost of Tag Computation}\q
To compute the homomorphic hash $h(B)$ on a block $B$ using Eqn.~\ref{eqn:tag}, it requires $m$ exponentiations and $(m-1)$
multiplications modulo $p$. We can reduce this computational complexity in the following way
at the cost of the client storing $m$ elements of $\Z_q^*$ which is essentially equivalent to just one block.
The client chooses $g\xleftarrow{R}G_q$ and $\gamma_1,\gamma_2,\ldots,\gamma_m\xleftarrow{R}\Z_q^*$.
The client sets $g_i=g^{\gamma_i} \Mod p$ for each $i\in[1,m]$. The client includes the vector
$\Gamma=[\gamma_1,\gamma_2,\ldots,\gamma_m]$ and $g$ in her secret key $SK$ and makes $g_1,g_2,\ldots,g_m$ public as before.
Now, the homomorphic hash $h(B)$ on a block $B$ is computed as 
\begin{align}\label{eqn:Imp}
 h(B)& = \prod\limits_{i=1}^{m}{g}^{\gamma_i b_i} \Mod p\notag\\
     & = g^{\sum_{i=1}^{m}{\gamma_i b_i}} \Mod p
\end{align}
which requires only \textit{one} exponentiation modulo $p$
along with $m$ multiplications and $(m-1)$ additions modulo $q$. This is a huge improvement
in the cost for computing an authentication tag. On the other hand, the storage overhead at the client's side
is $|\Gamma|$ which is same as the size of a single block $B$. Considering the fact that the client outsources
millions of blocks to the cloud server, this amount of client storage is reasonable for all practical
purposes.

\subsubsection{Storage Structure for $\tilde{\text{H}}$ and $\tilde{\text{C}}$}
\label{buff_tags}
The storage structures for $\tilde{\text{H}}$ and $\tilde{\text{C}}$ are exactly the same
as those for H and C, respectively, except that the authentication tags
(instead of data blocks) are stored in $\tilde{\text{H}}$ and $\tilde{\text{C}}$
(see Section~\ref{storage_blocks} for structures of H and C).

\subsection{Operations}
\label{operations}
There are three types of operations involved in a dynamic POR scheme. The client can read, write and audit
her data stored on the cloud server. The read and write operations are \textit{authenticated} in that the client
can verify the authenticity of these operations. We note that though the client herself performs reads and writes on her data,
she can delegate the auditing task to a third party auditor (TPA) for a publicly verifiable scheme. As our scheme
is \textit{publicly verifiable}, we use the term \textit{verifier} to denote an auditor who can be a TPA or the client herself.
Fig.~\ref{fig:flow} gives an overview of the communication flow between the client and the server during these
operations.

\begin{figure*}[t]
\centering
\fbox{\includegraphics[width=.46\textwidth]{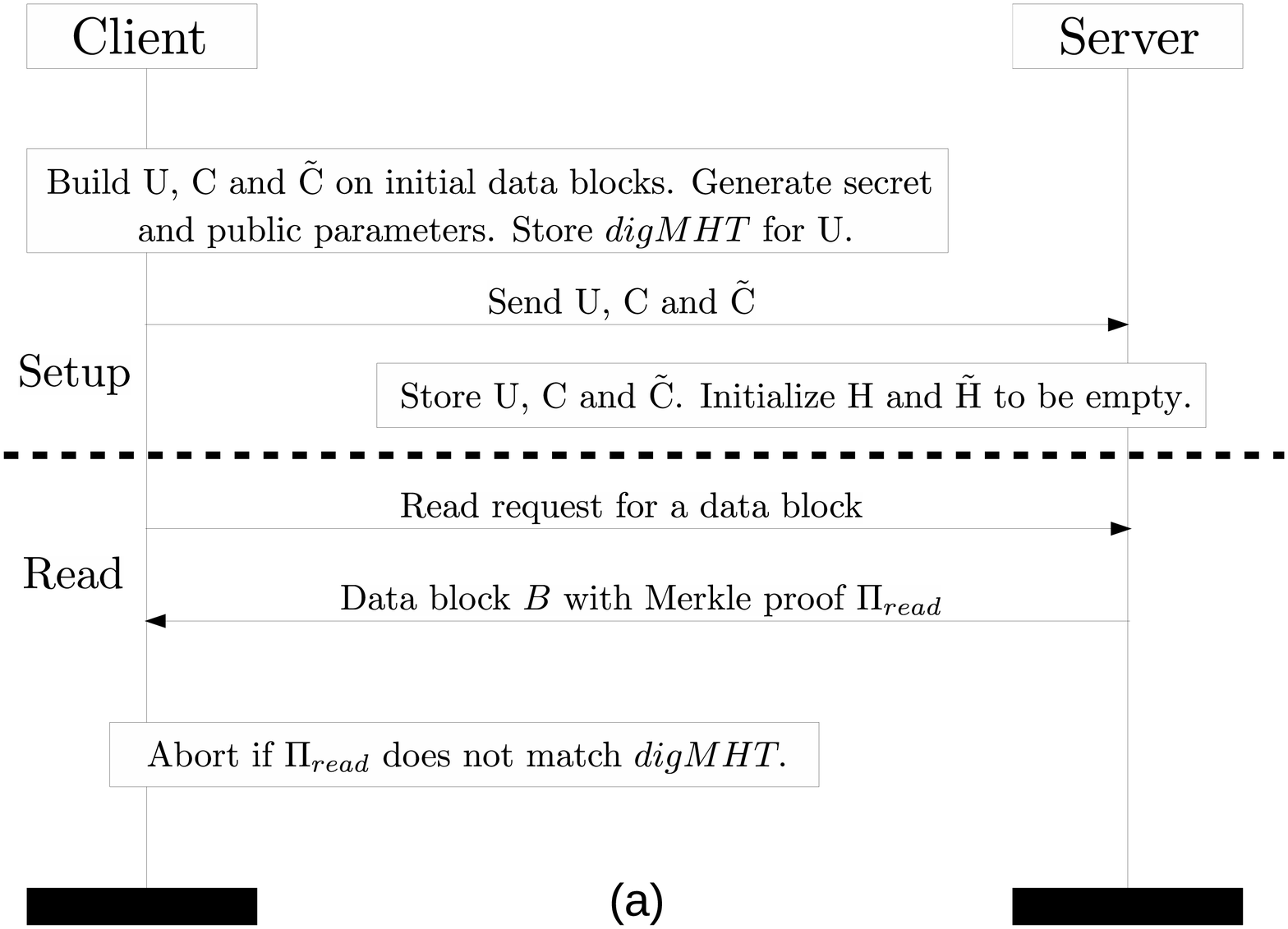}}
\qquad
\fbox{\includegraphics[width=.46\textwidth]{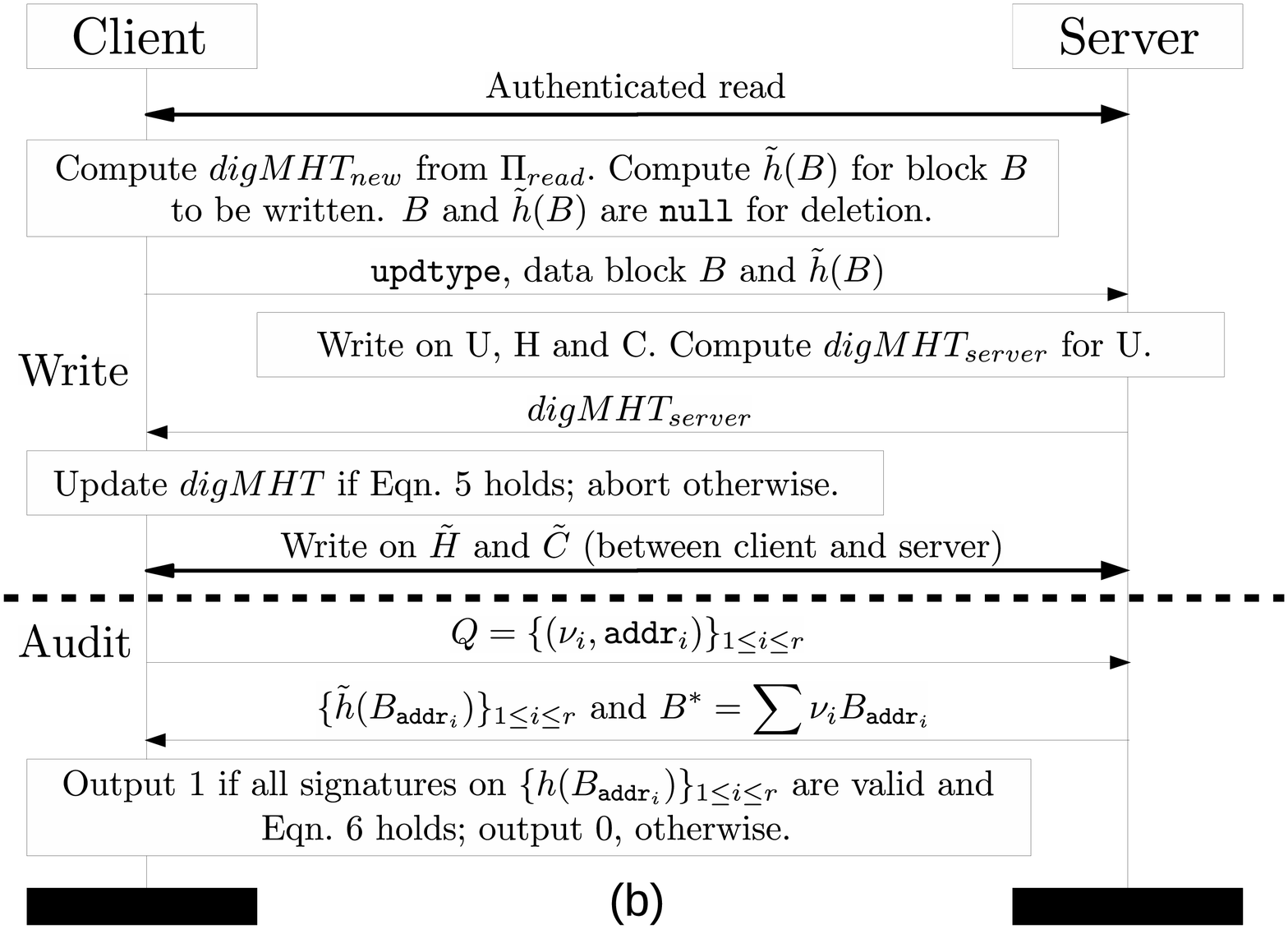}}
\caption{Communication flow between the client and the server for different operations described in Section~\ref{operations}.
	In the setup phase, the client sets parameters for the scheme and outsources the preprocessed file to the server.
	Initially, the client uploads U, C and $\tilde{\text{C}}$.
	The server stores them along with H and $\tilde{\text{H}}$ that are initialized to be empty.
	Then, the client can perform reads, writes and audits on her data in an \textit{interleaved} fashion.
	We note that, during a write, the server itself rebuilds H and C (if necessary). On the other hand, the client
	rebuilds $\tilde{\text{H}}$ and $\tilde{\text{C}}$ by downloading some of the authentication tags from them,
	computing the tag on the new block (using homomorphic property of tags) and sending the new tag
	to the server. As our scheme is publicly verifiable, audits can be performed by a third party
	auditor (TPA) as well.}\label{fig:flow}
\end{figure*}

\subsubsection{Read}
\label{subsec:read}
Reads are performed directly from the unencoded buffer U. The authenticity and freshness of the block
read can be guaranteed by using a Merkle hash tree~\cite{Merkle_CR} (or a similar data structure like
rank-based authenticated skip list~\cite{Erway_TISSEC}) over the blocks of U.
That is, the blocks of U constitute the leaf-nodes of the Merkle hash tree
(see Section~\ref{MHT} for a brief description of a Merkle hash tree).
The server sends the Merkle proof $\Pi_{read}$ containing the requested block and the labels of the nodes along
the associated path of the tree to the client. The client maintains the up-to-date value of the root-digest
of the Merkle hash tree $digMHT$ and verifies the proof $\Pi_{read}$ with respect to this root-digest.
We note that the size of the root-digest of the Merkle hash tree is $O(\lambda)$ bits.

\subsubsection{Write}
\label{subsec:write}
Let \texttt{updtype} denote the type of a write operation which can be insertion of a new block after the $i$-th block,
deletion of the $i$-th block or modification of the $i$-th block. A write operation affects the buffers in the following way.

\begin{itemize}
\item \textbf{Write on U}\q As the buffer U is unencoded, a write operation on U can be performed
in a similar way as done on the data blocks in a dynamic provable data possession (PDP) scheme~\cite{Erway_TISSEC}.
We briefly describe the procedure as follows.

Let $digMHT$ be the root-digest of the current Merkle hash tree which is stored at the client's side.
The client performs an \textit{authenticated} read on the $i$-th data block of U (as described above).
If the associated Merkle proof $\Pi_{read}$ does not match with $digMHT$, the client aborts.
Otherwise, she computes, from $\Pi_{read}$, the value that would be the new root-digest (say, $digMHT_{new}$)
if the write operation is correctly performed on U stored at the server.
The client stores $digMHT_{new}$ temporarily at her end and asks the server to perform the write on U.
The server performs this write operation on U, computes the root-digest $digMHT_{server}$ of the Merkle hash tree
and sends $digMHT_{server}$ to the client. The client verifies whether
\begin{align}\label{eqn:MHT}
digMHT_{new}\stackrel{?}{=}digMHT_{server}.
\end{align}
If they are not equal, the client aborts. Otherwise, the client sets $digMHT=digMHT_{new}$ at her end.

\item \textbf{Write on H and $\tilde{\text{H}}$}\q We assume that deletion of a block in U
corresponds to insertion of a block (with \texttt{null} content) in the hierarchical log H.
Therefore, for a write of any \texttt{updtype} (i.e., insertion, deletion or modification),
only insertions take place in H. The way a (possibly encoded) block $B$ is inserted in H
is discussed in details in Section~\ref{buff_H}. The cloud server itself performs this operation on H.

An insertion in $\tilde{\text{H}}$ is performed by the client herself as this procedure requires
the knowledge of secret information held by the client. The client computes the authentication tag
on the (possibly encoded) block and insert it in $\tilde{\text{H}}$.
The underlying basic operation of the rebuild phase of H is to compute a linear combination
$B$ (e.g., $\alpha_1 B_1+\alpha_2 B_2)$ of two blocks $B_1$ and $B_2$ (see Eqn.~1 and~2 in Fig.~\ref{fig:mixX}).
Similarly, the corresponding operation for the rebuild of $\tilde{\text{H}}$ is to compute
$\tilde{h}(B)$ given $\tilde{h}(B_1)$ and $\tilde{h}(B_2)$. For $i=1,2$, the client first
downloads $\tilde{h}(B_i)$ and verifies the signature on $h(B_i)$ by checking whether
\begin{align*}
 \text{Verify}_{psk}((h(B_i),\texttt{fid},\texttt{addr}_i,t_i),\tilde{h}(B_i))\stackrel{?}=\texttt{accept},
\end{align*}
where $psk$ is the verification key for the signature scheme $\mathcal{S}$.
We note that \texttt{addr}$_1$ (or \texttt{addr}$_2$) is the physical address of the block $B_1$ (or $B_2$)
written at time $t_1$ (or $t_2$), and \texttt{fid} is the file-identifier of the data file the block $B$ belongs to.
For any block in H and C, the time when the block was written most recently can be easily computed
from the current time itself. If the verification passes,
the client computes $h(B)=h(B_1)^{\alpha_1}\cdot h(B_2)^{\alpha_2}$ and $\tilde{h}(B)$ subsequently.
This requires two exponentiations and one multiplication modulo $p$ along with one Sign and two Verify operations.

\item \textbf{Write on C and $\tilde{\text{C}}$}\q As mentioned in Section~\ref{buff_C}, C
($\tilde{\text{C}}$ for authentication tags) is rebuilt after every $n$ writes.
The server performs a rebuild on C, whereas a rebuild on $\tilde{\text{C}}$ is performed by the client.
Basic operations involved in rebuilds of C and $\tilde{\text{C}}$ are the same
as those for rebuilds of H and $\tilde{\text{H}}$, respectively, and thus are omitted here.

\end{itemize}

\subsubsection{Audit}
\label{subsubsec:audit}
In the \textit{challenge} phase, the verifier chooses $r=O(\lambda\log n)$ random locations $\{\texttt{addr}_i\}_{1\le i\le r}$
from all levels (where $O(\lambda)$ random locations are selected from each level) of H and C.
Then, she sends a challenge set $Q=\{(\nu_i,\texttt{addr}_i)\}_{1\le i\le r}$
to the cloud server, where $\nu_1,\nu_2,\ldots,\nu_r\xleftarrow{R}\Z_q^*$ are random coefficients.
In the \textit{response} phase, the server sends to the verifier a proof containing $B^*=\sum_{1\le i\le r}{\nu_i}B_{\texttt{addr}_i}$
and $\{\tilde{h}(B_{\texttt{addr}_i})\}_{1\le i\le r}$. Upon receiving the proof from
the server, the verifier verifies each of the signatures on $\{h(B_{\texttt{addr}_i})\}_{1\le i\le r}$.
Then, she computes $h^*=\prod_{1\le i\le r}{h(B_{\texttt{addr}_i})}^{\nu_i}$ and $h(B^*)$ using Eqn.~\ref{eqn:tag}.
Finally, the verifier checks whether
\begin{align}\label{eqn:verAudit}
h(B^*)\stackrel{?}=h^*
\end{align}
and outputs 0 if any of the verifications fails; she outputs 1, otherwise.

\section{Security}
\label{security}

We define security of a dynamic POR scheme~\cite{Stefanov_CCS,Wichs_ORAM} and show that our scheme described in Section~\ref{scheme}
is secure according to this definition. We also show that the server cannot pass an audit without
storing \textit{all} data blocks properly, except with some probability negligible in $\lambda$.

\subsection{Overview of Security of a Dynamic POR Scheme}
\label{security_overview}
A dynamic POR scheme must satisfy the following properties~\cite{Stefanov_CCS}.
The formal security definition is given in Section~\ref{security_model}.
\begin{enumerate}

\item \textbf{Authenticity and Freshness}\q The authenticity property requires that the cloud server cannot
produce valid proofs during 
audits without storing
the corresponding blocks and their respective authentication information untampered, except with
a probability negligible in $\lambda$.

For dynamic data,
the client can modify an existing data block. However, a malicious
cloud server may discard this change and keep an old copy of the block. As the old copy
of the block and its corresponding tag constitute a valid pair, the client has no way to detect
if the cloud server is storing the \textit{fresh} (latest) copy. Thus, the client must be convinced
that the server has stored the up-to-date blocks.\smallskip

\item \textbf{Retrievability}\q Retrievability of data requires that, given a probabilistic
polynomial-time adversary $\mathcal{A}$ that can respond correctly to a challenge $Q$
with some non-negligible probability,
there exists a polynomial-time extractor algorithm $\mathcal{E}$ that can extract \textit{all}
data blocks of the file (except with negligible probability) by challenging $\mathcal{A}$ for a polynomial
(in $\lambda$) number of times and verifying the responses sent by $\mathcal{A}$.
The algorithm $\mathcal{E}$ has a black-box rewinding access to $\mathcal{A}$.
  Authenticity and freshness of data restrict the adversary $\mathcal{A}$
  to produce valid responses (without storing the data in an authentic and up-to-date fashion) during these interactions
  only with some probability negligible in the security parameter $\lambda$.

\end{enumerate}

\subsection{Security Model}
\label{security_model}
We first describe the following security game between the challenger (acting as the client)
and the adversary (acting as the cloud server).

\begin{itemize}

\item The adversary selects a file $F$ associated with a file-identifier \texttt{fid} to store.
The challenger processes the file to form another file $F'$ and returns $F'$ to the adversary.
The challenger stores only some metadata for verification purpose.

\item The adversary adaptively chooses a sequence of operations defined by
$\{\texttt{op}_i\}_{1\le i\le q_1}$ ($q_1$ is polynomial in
the security parameter $\lambda$), where $\texttt{op}_i$ is a read, a write or an audit.
The challenger executes these operations on the file stored by the adversary.
For each operation, the challenger verifies the response sent by the adversary
and updates the metadata at her end only if the response passes the verification.

\item Let $F^*$ be the final state of the file after $q_1$ operations.
The challenger has the latest metadata for the file $F^*$. Now, she executes an audit
protocol with the adversary. The challenger sends a random challenge set $Q$ to the adversary,
and the adversary returns a cryptographic proof to the challenger.
The adversary wins the game if it passes the verification.

\end{itemize}

\begin{definition}[\textbf{Security of a Dynamic POR Scheme}]\label{def:security_dpor}
A dynamic POR scheme is secure if, given any probabilistic polynomial-time
adversary $\mathcal{A}$ who can win the security game mentioned above with some non-negligible
probability, there exists a polynomial-time extractor algorithm $\mathcal{E}$ that can extract
all data blocks of the file by interacting (via challenge-response) with $\mathcal{A}$
polynomially many times.
\end{definition}

\subsection{Security Analysis of Our Scheme}
\label{security_analysis}
We state and prove the following theorem in order to analyze the security of our dynamic POR scheme.

\begin{theorem}\label{theorem_DPOR}
Given that the discrete logarithm assumption holds in $G_q$ and the underlying digital signature scheme is secure,
the dynamic POR scheme described in Section~\ref{scheme} is secure according to Definition~\ref{def:security_dpor}.
\end{theorem}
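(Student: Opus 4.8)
The plan is to decompose the security argument into two parts matching the two properties identified in Section~\ref{security_overview}: first, \emph{authenticity and freshness}, namely that no probabilistic polynomial-time adversary can produce a proof passing the audit verification (Eqn.~\ref{eqn:verAudit}) — or a read/write response — that is inconsistent with the true, up-to-date contents of U, H and C, except with negligible probability; and second, \emph{extractability}, namely that from any adversary that answers a random challenge correctly with non-negligible probability one can construct the extractor $\mathcal{E}$ of Definition~\ref{def:security_dpor} that recovers all data blocks. I would handle these in that order, since the extractor analysis presupposes that the responses it verifies are genuine.

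For the authenticity part, the reduction is to the two stated assumptions. For \emph{reads}, authenticity and freshness follow from the collision-resistance of the Merkle-tree hash $h_{CR}$ (Section~\ref{MHT}): any accepting read proof $\Pi_{read}$ inconsistent with the true block at location $i$ yields a hash collision along the associated path, and freshness follows because the client checks against her locally stored $digMHT$ (and during writes against $digMHT_{new}$ via Eqn.~\ref{eqn:MHT}). For the authentication tags on blocks of H and C, I would argue that an accepting audit response that "cheats" on some block must either forge a signature — contradicting existential unforgeability of $\mathcal{S}$, using that each signature binds $(h(B),\texttt{fid},\texttt{addr},t)$ so a stale or misplaced block is detected — or reuse a legitimate signed $h(B)$ but supply an aggregate $B^*$ with $h(B^*)=h^*$ while $B^*\neq\sum \nu_i B_{\texttt{addr}_i}$; by the homomorphic identity $h(\sum\nu_i B_i)=\prod h(B_i)^{\nu_i}=h^*$ this gives two distinct preimages with the same homomorphic hash, contradicting the collision-resistance of $h$, which by Section~\ref{hom_hash} rests on the discrete logarithm assumption in $G_q$. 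One careful point here is that the adversary's queried locations and coefficients are adaptive, so the reduction must guess (or embed the challenge into) the relevant tag positions with only polynomial loss; I would state this explicitly and absorb the factor $r=O(\lambda\log n)$ into the security loss.

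For the extractability part, I would follow the standard erasure-code-plus-extractor template of Shacham--Waters~\cite{SW_ACR} as adapted to the hierarchical buffer structure~\cite{Stefanov_CCS}. Given $\mathcal{A}$ answering random challenges with non-negligible probability $\epsilon$, the extractor repeatedly rewinds $\mathcal{A}$ on fresh random challenge sets $Q$, keeps only the responses that pass verification (by the authenticity part, these carry the true linear combinations of true, fresh blocks), and by a Chernoff/coupon-collector argument recovers, for each level of H and C, enough independent linear equations $B^*=\sum\nu_i B_{\texttt{addr}_i}$ over $\Z_q$ to solve for a constant fraction of that level's symbols. Because every level $\mathrm{H}_l$ (and C) is encoded with a $(2^{l+1},2^l,2^l)$ MDS code whose generator matrix has every $2^l\times 2^l$ submatrix full rank, recovering any $2^l$ of the $2^{l+1}$ symbols suffices to decode the original blocks; combining this across levels and with the up-to-date buffer U reconstructs the entire current file. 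The main obstacle — and the place I would spend the most care — is making the authenticity reduction fully rigorous in the \emph{adaptive, interleaved} setting: the adversary interleaves reads, writes and audits and controls addresses, so one must argue that the simulated challenger's view (and in particular the evolving $\tilde{\mathrm{H}},\tilde{\mathrm{C}}$ maintained via the homomorphic tag operations) is faithfully reproducible by the reduction, and that a single cheating event across the whole transcript can be localized to one signature forgery or one homomorphic-hash collision without super-polynomial loss.
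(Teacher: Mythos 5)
Your proposal is correct and follows essentially the same route as the paper: authenticity/freshness is reduced to signature unforgeability (wrong hash with a valid signature) and to the discrete-log-based collision resistance of the homomorphic hash (same hash, different block), and retrievability is obtained by an extractor that collects linearly independent verified responses per level, solves by linear algebra, and invokes the MDS property of the $(2^{l+1},2^l,2^l)$ codes. Your treatment is if anything slightly more explicit than the paper's on the aggregate-response case ($B^*\neq\sum\nu_i B_{\texttt{addr}_i}$ with $h(B^*)=h^*$) and on the adaptive-reduction bookkeeping, but these are refinements of, not departures from, the paper's argument.
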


\begin{proofTheorem}
We use the following claim in order to prove Theorem~\ref{theorem_DPOR}.

\begin{claim}\label{claim_authenticity}
Given that the discrete logarithm assumption holds in $G_q$ and the underlying digital signature scheme is secure,
authenticity and freshness of the challenged blocks in H and C are guaranteed.
\end{claim}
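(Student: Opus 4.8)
The plan is to prove Claim~\ref{claim_authenticity} by a reduction: any adversary (malicious server) that breaks authenticity or freshness of a challenged block in H or C is turned into either a forger against the signature scheme $\mathcal{S}$ or a solver for the discrete logarithm problem in $G_q$. First I would fix what ``breaking authenticity/freshness'' means in terms of the audit verification (Eqn.~\ref{eqn:verAudit}): during the final audit the server returns $B^* = \sum_i \nu_i B_{\texttt{addr}_i}$ together with tags $\tilde h(B_{\texttt{addr}_i}) = (h(B_{\texttt{addr}_i}), \sigma_i)$, and the verifier accepts iff every $\sigma_i$ is a valid signature on $(h(B_{\texttt{addr}_i}),\texttt{fid},\texttt{addr}_i,t_i)$ with the correct (current-state) address and timestamp, and $h(B^*) = \prod_i h(B_{\texttt{addr}_i})^{\nu_i}$. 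Authenticity fails if the accepted $B_{\texttt{addr}_i}$ differs from the block the honest execution of the operation sequence would have placed at $\texttt{addr}_i$; freshness fails if the server substitutes a stale-but-once-valid block together with its old tag.

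The core of the argument splits into three cases, and I would handle them in this order. (1) If the server presents a tag $\tilde h(B_i) = (h_i, \sigma_i)$ where $\sigma_i$ is a valid signature on a message $(h_i,\texttt{fid},\texttt{addr}_i,t_i)$ that the client never signed — in particular a stale block whose correct current timestamp differs from $t_i$, or a wholly fabricated hash value — then, since the client binds the physical address and the write-time $t$ into every signed message and since for any location in H or C the current legitimate write-time is a deterministic function of the current time, the pair $((h_i,\texttt{fid},\texttt{addr}_i,t_i),\sigma_i)$ is a valid message–signature pair outside the set $Q_s$ of queried messages. This is exactly an existential forgery, contradicting the security of $\mathcal{S}$ (Section~\ref{dig_sig}). (2) If all tags are on legitimately signed messages but the returned block $B_{\texttt{addr}_i}$ is \emph{not} the block $\hat B_{\texttt{addr}_i}$ whose hash was actually signed, i.e.\ $B_{\texttt{addr}_i} \neq \hat B_{\texttt{addr}_i}$ yet $h(B_{\texttt{addr}_i}) = h(\hat B_{\texttt{addr}_i})$, then I have produced a collision for the homomorphic hash $h$, which by the construction of Krohn et al.~\cite{Krohn_SP} (Section~\ref{hom_hash}) yields a solver for the discrete logarithm problem in $G_q$ — contradicting the discrete logarithm assumption (Section~\ref{disLog}). (3) The remaining possibility is that the tags are genuine and the individual blocks match the signed hashes, but the aggregated block $B^*$ does not equal $\sum_i \nu_i \hat B_{\texttt{addr}_i}$; then by the homomorphic property $h(\sum_i \nu_i \hat B_{\texttt{addr}_i}) = \prod_i h(\hat B_{\texttt{addr}_i})^{\nu_i} = h^*$, so an accepted but incorrect $B^*$ again collides with the honest aggregate under $h$, reducing once more to discrete log. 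Combining the three cases, a server that breaks authenticity or freshness with non-negligible probability breaks one of the two underlying assumptions, which proves the claim.

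Two points deserve care. The first, and the main obstacle, is case~(1): I must argue convincingly that a stale block really does correspond to a message \emph{never queried}. This rests on the timestamp/address encoding and on the fact that the honest operation sequence together with the current time uniquely determines, for each physical slot in H and C, which block (and hence which $(\texttt{addr},t)$ pair) should currently reside there — a structural fact about the hierarchical log (Section~\ref{buff_H}) and the rebuild schedule. I would state this as an auxiliary observation about the storage layout before invoking unforgeability. The second point is bookkeeping: the reduction simulator must run the whole security game of Section~\ref{security_model} faithfully, forwarding every hash value the honest client would sign to its signing oracle, and answering the adversary's reads/writes/audits exactly as the challenger would; only at the final audit does it extract the forgery or the collision. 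This simulation is routine given the explicit descriptions of Read, Write and Audit in Section~\ref{operations}, so I would describe it briefly rather than in full detail. The extractor $\mathcal{E}$ needed for Definition~\ref{def:security_dpor} is then built on top of this claim in the proof of Theorem~\ref{theorem_DPOR} (using erasure-decoding of C together with rewinding $\mathcal{A}$ on many random challenges), but that is outside the scope of Claim~\ref{claim_authenticity} itself.
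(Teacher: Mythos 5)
Your proposal is correct and follows essentially the same route as the paper: its proof of the claim likewise splits into a signature-forgery case (your case~(1), with staleness handled exactly by the binding of \texttt{addr} and the write-time $t$ into the signed message) and a hash-collision case reduced to the discrete logarithm problem via the Krohn et al.\ construction (your case~(2)). Your case~(3) --- that an accepted but incorrect aggregate $B^*$ satisfying Eqn.~\ref{eqn:verAudit} also yields a collision of $h$ --- is not part of the paper's proof of the claim itself but is used implicitly in the proof of Theorem~\ref{theorem_DPOR}; making it explicit is a minor completeness improvement rather than a different approach.
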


\begin{proof}
We prove the above claim for the log structure H. The proof for C follows in a similar way.
In our scheme, every block $B$ (of the file identified by \texttt{fid}) in H corresponds to an authentication tag
$\tilde{h}(B)=(h(B),\text{Sign}_{ssk}(h(B),\texttt{fid},\texttt{addr},t))$ present in $\tilde{\text{H}}$,
where the signing algorithm Sign uses the secret key $ssk$ of the client and $t$ is the last write-time of the block $B$.
Let $B$ be the correct block that was actually written by the client to the address \texttt{addr} at time $t$. 
Suppose this block in \texttt{addr} is challenged during an audit.
We note that the last write-time $t$ of the block is computable from \texttt{addr} and the current time.
So, the values of \texttt{fid}, \texttt{addr} and $t$ are known to the challenger.
Therefore, in order to break the authenticity of the scheme,
the PPT adversary $\mathcal{A}$ has to find a block $B'\not=B$ and its tag $\tilde{h}(B')$ such that one of the following conditions holds:
\begin{itemize}
 \item Case I: $h(B')\not=h(B)$ and $\tilde{h}(B')=(h(B'),\text{Sign}_{ssk}(h(B'),\texttt{fid},\texttt{addr},t))$,
 \item Case II: $h(B')=h(B)$.
\end{itemize}

\paragraph{Case I}\quad
We show that, if the adversary $\mathcal{A}$ can find a block $B'\not=B$ and its authentication tag $\tilde{h}(B')$ such that 
$h(B')\not=h(B)$ and $\tilde{h}(B')=(h(B'),\text{Sign}_{ssk}(h(B'),\texttt{fid},\texttt{addr},t))$, 
then it can break the security of the underlying signature scheme
(the security of a digital signature scheme is discussed in Section~\ref{dig_sig}).

Let the adversary $\mathcal{A}$ be provided with a set of 
polynomially many 
authentication tags $\{\tilde{h}(B_i)=(h(B_i),\text{Sign}_{ssk}(h(B_i),\texttt{fid},\texttt{addr}_i,t_i))\}_{i\in I}$
for $\{(B_i,\texttt{addr}_i,t_i)\}_{i\in I}$
of $\mathcal{A}$'s choice
(where $I=[1,k]$ for some $k$ polynomial in $\lambda$).
Let us assume that the adversary $\mathcal{A}$ is able to find another block $B'$ and its tag 
$\tilde{h}(B')=(h(B'),\text{Sign}_{ssk}(h(B'),\texttt{fid},\texttt{addr}_j,t_j))$, 
such that $j\in I$, $B'\not=B_j$ and $h(B')\not=h(B_j)$.
Then, we can construct another probabilistic polynomial-time (PPT)
algorithm $\mathcal{B}^{\mathcal{O}_{ssk}(\cdot)}$ that, given the public key $psk$ and an access to the signing oracle $\mathcal{O}_{ssk}(\cdot)$, 
executes $\mathcal{A}$ as a subroutine. 
Initially, $\mathcal{B}$ provides the 
public parameters $(g_1,g_2,\ldots,g_m,\texttt{fid},psk)$ and
the description of $G_q$ to $\mathcal{A}$.
With the help of $\mathcal{O}_{ssk}(\cdot)$, $\mathcal{B}$
responds to $\mathcal{A}$'s queries with $\{\tilde{h}(B_i)=(h(B_i),\text{Sign}_{ssk}(h(B_i),\texttt{fid},\texttt{addr}_i,t_i))\}_{i\in I}$.
Now, if $\mathcal{A}$ finds another block $B'$ and its tag 
$\tilde{h}(B')=(h(B'),\text{Sign}_{ssk}(h(B'),\texttt{fid},\texttt{addr}_j,t_j))$ 
as described above with probability $\epsilon_{\mathcal{A}}$ in (polynomial) time $t'_{\mathcal{A}}$, 
then $\mathcal{B}$ also finds a forged signature $\text{Sign}_{ssk}(h(B'),\texttt{fid},\texttt{addr}_j,t_j)$
(that was not queried to the signing oracle before) 
with probability $\epsilon_{\mathcal{B}}=\epsilon_{\mathcal{A}}$ in time $t'_{\mathcal{B}}\approx t'_{\mathcal{A}}$.

\paragraph{Case II}\quad
We show that, if the adversary $\mathcal{A}$ can find a block $B'\not=B$ and 
its authentication tag $\tilde{h}(B')=(h(B'),\text{Sign}_{ssk}(h(B'),\texttt{fid},\texttt{addr},t))$ 
such that $h(B')=h(B)$, 
then it can solve the discrete logarithm problem over $G_q$
(we refer to~\cite{IncCrypto_CR,Krohn_SP} for the detailed proof showing that the collision-resistance property holds for $h$).

The idea of the proof is as follows. Let us assume that the adversary $\mathcal{A}$,
given the description of the multiplicative group $G_q=\langle g \rangle$ and
$m$ random elements $g_1,g_2,\ldots,g_m$ of $G_q$,
is able to find two blocks
$B,B'\in \Z_q^m$
such that $B\not=B'$ and $h(B)=h(B')$. Then, we can construct another probabilistic polynomial-time (PPT) 
algorithm $\mathcal{B}$ that, given the description of $G_q$ 
and $y\in G_q$, executes $\mathcal{A}$ as a subroutine to find a collision and uses this collision to compute $x\in\Z_q$ such that $y=g^x$.
In order to do that, $\mathcal{B}$ selects $z_1,z_2,\ldots,z_m\xleftarrow{R}\{0,1\}$
and $u_1,u_2,\ldots,u_m\xleftarrow{R}\Z_q$. For each $i\in [1,m]$, $\mathcal{B}$ sets
$g_i=g^{u_i}$ if $z_i=0$; it sets $g_i=y^{u_i}$  if $z_i=1$.
Then, $\mathcal{B}$ provides $\mathcal{A}$ with the description of $G_q=\langle g \rangle$ and
the elements $g_1,g_2,\ldots,g_m\in G_q$ computed in the previous step.
Now, suppose $\mathcal{A}$ finds two blocks 
$B=[b_1,b_2,\ldots,b_m]\in \Z_q^m$ and $B'=[b'_1,b'_2,\ldots,b'_m]\in \Z_q^m$ 
with probability $\epsilon_{\mathcal{A}}$ in (polynomial) time $t'_{\mathcal{A}}$,
such that $B\not=B'$ and $h(B)=h(B')$.
Then, $\mathcal{B}$ sets $a=\sum_{z_i=1}u_i(b_i-b'_i) \Mod q$ and computes $a'=a^{-1}\Mod q$ 
($a$ is non-zero with probability at least $\frac{1}{2}$).
Since $h(B)=h(B')$, we have
\begin{align*}
	     & \prod_{i=1}^{m}{g_i^{b_i}} = \prod_{i=1}^{m}{g_i^{b'_i}}\\
    \implies & \prod_{z_i=1}{y^{u_i(b_i-b'_i)}} = \prod_{z_i=0}{g^{u_i(b'_i-b_i)}}\\
    \implies & y^a = \prod_{z_i=0}{g^{u_i(b'_i-b_i)}}\\
    \implies & y^{aa'} = \prod_{z_i=0}{g^{a'u_i(b'_i-b_i)}}\\
    \implies & y = g^x,
   \end{align*}
where $x=\sum_{z_i=0}{a'u_i(b'_i-b_i)} \Mod q$ is the discrete logarithm of $y$ in $G_q$.
Thus, the algorithm $\mathcal{B}$ solves the discrete logarithm problem over $G_q$ 
with probability $\epsilon_{\mathcal{B}}\ge\frac{\epsilon_{\mathcal{A}}}{2}$ 
in (polynomial) time $t'_{\mathcal{B}}=t'_{\mathcal{A}}+O(m{\lambda}^3)$. 
The overhead term $O(m{\lambda}^3)$ is attributed to some arithmetic operations (including $m$ exponentiation operations)
that $\mathcal{B}$ has to perform.

Given an address \texttt{addr}, let $B$ be the latest block that was actually written by the client to \texttt{addr} at time $t$.
Let the challenger challenge the block in \texttt{addr} during an audit.
In order to retain an older block $B'\not=B$ (written to the same address \texttt{addr} at time $t'<t$) and still pass the audit,
the adversary $\mathcal{A}$ has to produce its authentication tag for time $t$ (we note that the tag for $B'$ for time $t'$
is available to $\mathcal{A}$) such that one of the conditions mentioned above (Case I and Case II) holds.
As we have seen earlier, it is computationally hard to find such a block $B'$, except with a probability negligible in $\lambda$.
Thus, the adversary must store each of the challenged blocks with its latest content to pass the audit.
\end{proof}

We define a polynomial-time extractor algorithm $\mathcal{E}$ that can extract
all blocks from each of the levels of H and C (except with negligible probability) by interacting with an adversary $\mathcal{A}$
that wins the security game described in Section~\ref{security_model} with some non-negligible probability.
As our dynamic POR scheme satisfies the \textit{authenticity}
and \textit{freshness} properties mentioned above,
the adversary $\mathcal{A}$ cannot produce a valid proof $(B^*=\sum_{1\le i\le r}{\nu_i}B_{\texttt{addr}_i},\{\tilde{h}(B_{\texttt{addr}_i})\}_{1\le i\le r})$
for a given challenge set $Q=\{(\nu_i,\texttt{addr}_i)\}_{1\le i\le r}$ without storing
the challenged blocks and their corresponding tags properly,
except with some negligible probability (see Section~\ref{subsubsec:audit} and Claim~\ref{claim_authenticity}).
This means that if the verifier outputs 1 during the extraction phase,
$B^*$ in the proof is indeed the linear combination of the untampered blocks $\{B_{\texttt{addr}_i}\}_{1\le i\le r}$
using coefficients $\{\nu_i\}_{1\le i\le r}$.

Suppose that the extractor $\mathcal{E}$ wants to extract
$r$ blocks indexed by $J$. It challenges $\mathcal{A}$ with a challenge set $Q=\{(\nu_i,\texttt{addr}_i)\}_{i\in J}$.
If the proof is valid (that is, the verifier outputs 1), $\mathcal{E}$ initializes a matrix
$M_\mathcal{E}$ as $[\nu_{1i}]_{i\in J}$, where $\nu_{1i}=\nu_{i}$ for each $i\in J$.
The extractor challenges $\mathcal{A}$ for the same $J$ but with different random coefficients.
If the verifier outputs 1 and the vector of coefficients is linearly independent to
the existing rows of $M_\mathcal{E}$, then $\mathcal{E}$ appends this vector to $M_\mathcal{E}$ as a row.
The extractor $\mathcal{E}$ runs this procedure until the matrix $M_\mathcal{E}$ has $r$ linearly independent rows.
So, the final form of the full-rank matrix $M_\mathcal{E}$ is $[\nu_{ji}]_{j\in[1,r], i\in J}$.
Consequently, the challenged blocks can be extracted using Gaussian elimination.

Following the way mentioned above, the extractor algorithm $\mathcal{E}$ can interact with $\mathcal{A}$
(polynomially many times) in order to extract $\rho$-fraction of blocks
(for some $\rho$) for each level of H and C by setting the index set $J$ appropriately.
Use of a $\rho$-rate erasure code ensures retrievability of all blocks
of C (i.e., all the encoded blocks of U up to the last rebuild of C) and H (i.e., all the encoded blocks
of U written after the last rebuild of C).
For each $l$-th level of H (or C), the FFT-based code used in our scheme is a $(2^{l+1},2^l,2^l)$-erasure code;
thus, $\rho=\frac{1}{2}$.

This completes the proof of Theorem~\ref{theorem_DPOR}.
\end{proofTheorem}

\subsection{Probabilistic Guarantees}
As we mention in Section~\ref{operations}, each of the levels of H and the buffer C is audited with $O(\lambda)$ random locations.
Due to the use of a $(2^{l+1},2^l,2^l)$-erasure code for each level $0\le l\le \flr{\log n}$,
the server has to actually delete half of the blocks in a level in order to delete a single block in that level.
Thus, if the server corrupts half of the blocks in any level, then
it passes an audit with probability $p_{cheat}=(1-\frac{1}{2})^{O(\lambda)}=2^{-O(\lambda)}$
that is negligible in $\lambda$.

\section{Performance Analysis}
\label{performance_ana}

We analyze the performance of the following types of operations (described in Section~\ref{operations})
involved in our publicly verifiable dynamic POR scheme.

\begin{itemize}
\item \textbf{Read}\q
For an authenticated read on the data block $B$ present in U, the server sends the corresponding Merkle proof $\Pi_{read}$
which consists of the block $B$, the data block in the sibling leaf-node of $B$ and the hash values along the associated path of the Merkle
hash tree (see Section~\ref{MHT}).
Thus, a read operation takes $2\beta+O(\lambda\log n)$ communication bandwidth between the client and the server.

To reduce this cost, the client can generate authentication tags on the data blocks of U (as discussed in Section~\ref{tag_gen})
and construct a Merkle tree over these tags instead of the data blocks. In this setting, $\Pi_{read}$ consists of $\tilde{h}(B)$,
the authentication tag in its sibling leaf-node and the hash values along the associated path.
This reduces the communication bandwidth between the client and the server for a read to $\beta+O(\lambda\log n)$.

\item \textbf{Write}\q
A write operation incurs the following costs.

\begin{itemize}
\item \textit{Write on U}:\q A write operation on U involves an authenticated read operation followed by
the verification of Eqn.~\ref{eqn:MHT}. Thus, each write operation requires $\beta+O(\lambda\log n)$ bandwidth
between the client and the server (for communicating $\Pi_{read}$ and $digMHT_{server}$).

\item \textit{Write on H and $\tilde{\text{H}}$}:\q The cost of a write on H is $O(\beta\log n)$ (see Section~\ref{buff_H}).
Similarly, the cost of a write on $\tilde{\text{H}}$ is $O(\lambda\log n)$ as the blocks are replaced
by their authentication tags in $\tilde{\text{H}}$ and
the size of a tag is $O(\lambda)$ bits.

\item \textit{Write on C and $\tilde{\text{C}}$}:\q C (or $\tilde{\text{C}}$) is rebuilt after every $n$ writes.
As mentioned in Section~\ref{buff_C}, a write operation on C costs $O(\beta\log n)$ both in time and bandwidth.
Similarly, the cost of a write on $\tilde{\text{C}}$ is $O(\lambda\log n)$.

\end{itemize}

\item \textbf{Audit}\q
For a challenge set $Q$ containing $r=O(\lambda\log n)$ random locations $\{\texttt{addr}_i\}_{1\le i\le r}$
and random coefficients $\nu_1,\nu_2,\ldots,\nu_r\in\Z_q^*$, the server computes a proof containing
$B^*=\sum_{1\le i\le r}{\nu_i}B_{\texttt{addr}_i}$ and $\{\tilde{h}(B_{\texttt{addr}_i})\}_{1\le i\le r}$
and sends the proof to the verifier.
Thus, the bandwidth required for an audit is given by
$\beta+O(\lambda^2\log n)$.

\end{itemize}

\medskip
\noindent
\textbf{Comparison among Dynamic POR Schemes}\q
We compare our scheme with other existing dynamic proofs-of-retrievability (POR) schemes which is summarized in Table~\ref{tab:comparison_POR}.
The comparison is based on the asymptotic complexity for different parameters.
Some of the figures mentioned in Table~\ref{tab:comparison_POR} are taken from~\cite{Stefanov_CCS}.
Table~\ref{tab:parameters} mentions typical values of the parameters used in our scheme~\cite{Krohn_SP}.

\begin{table*}[tbp]
\small
\centering
\caption{Comparison among dynamic POR schemes based on different parameters (asymptotic complexity)}\label{tab:comparison_POR}
\begin{tabular}{|c|c|c|c|c|c|c|}
\hline
Dynamic  & \multirow{2}{*}{Client} & \multicolumn{2}{c|}{Cost of a write operation} & \multicolumn{2}{c|}{Cost of an audit operation} & {\multirow{3}{*}{Verifiability}} \\
\cline{3-6}
POR & \multirow{2}{*}{storage} & Server  & \multirow{2}{*}{Bandwidth} & Server  & \multirow{2}{*}{Bandwidth} & \\
schemes &  & computation &  &  computation &  & \\
\hline
\hline
Iris~\cite{IRIS}			& $O(\beta\sqrt{n})$	& $O(\beta)$       		& $O(\beta)$					& $O(\beta\lambda\sqrt{n})$	& $O(\beta\lambda\sqrt{n})$	& Private\\
\hline
Cash et al.~\cite{Wichs_ORAM}		& $O(\beta)$		& $O(\beta\lambda(\log n)^2)$	& $O(\beta\lambda(\log n)^2)$			& $O(\beta\lambda(\log n)^2)$	& $O(\beta\lambda(\log n)^2)$	& Private\\
\hline
Chandran et al.~\cite{Bhavana_TCC}	& $O(\beta)$		& $O(\beta(\log n)^2)$		& $O(\beta(\log n)^2)$				& $O(\beta\lambda\log n)$	& $O(\beta\lambda\log n)$	& Private\\
\hline
\multicolumn{1}{|c|}{\multirow{2}{*}{Shi et al.~\cite{Stefanov_CCS}}}		& $O(\beta)$		& $O(\beta\log n)$		& $\beta+O(\lambda\log n)$			& $O(\beta\lambda\log n)$	& $\beta+O(\lambda^2\log n)$	& Private\\
\cline{2-7}
\multicolumn{1}{|c|}{}		& $O(\beta\lambda)$	& $O(\beta\log n)$		& $\beta(1+\epsilon)+O(\lambda\log n)^\dag$	& $O(\beta\lambda\log n)$	& $O(\beta\lambda\log n)$	& Public\\
\hline
Our scheme				& $O(\beta)$		& $O(\beta\log n)$		& $\beta+O(\lambda\log n)$			& $O(\beta\lambda\log n)$	& $\beta+O(\lambda^2\log n)$	& Public\\
\hline
\end{tabular}
\vspace{0.12in}
\begin{tablenotes}
\item[] We take $\lambda$ as the security parameter and $n$ as the number of blocks (each $\beta$-bits long)
	of the data file to be outsourced to the server. For all of the schemes mentioned above, the storage
	on the server side is $O(\beta n)$, where $\beta\gg\lambda$. The cost of an authenticated read operation
	is $\beta+O(\lambda\log n)$ if a Merkle hash tree is maintained over the unencoded data blocks
	for checking authenticity and freshness. \smallskip
\item[] $\dag$ $\epsilon$ is a constant such that $\epsilon>0$.
\end{tablenotes}
\end{table*}

From Table~\ref{tab:comparison_POR}, we note that, in our publicly verifiable dynamic POR scheme,
bandwidths required for a write and an audit are given by
$\beta+O(\lambda\log n)$ and $\beta+O(\lambda^2\log n)$, respectively. These figures are asymptotically
the same as those in the \textit{privately verifiable} scheme of~\cite{Stefanov_CCS}. On the other hand,
this is a significant improvement over the
\textit{publicly verifiable} scheme of~\cite{Stefanov_CCS}
where bandwidths required for a write and an audit are $\beta(1+\epsilon)+O(\lambda\log n)$ and $O(\beta\lambda\log n)$,
respectively, for a constant $\epsilon>0$ and $\beta\gg\lambda$.

Additionally, one drawback of the publicly verifiable scheme proposed by Shi et al.~\cite{Stefanov_CCS}
is due to the fact that one or more Merkle hash trees (\textit{separate} from
the Merkle hash tree\footnote{We note that, in our scheme as well as in~\cite{Stefanov_CCS},
a Merkle hash tree is maintained for the unencoded buffer U. However, as U is not audited
(its authenticity is checked only by the client during a read or write) by a third party auditor, the client keeps
the root-digest of this Merkle hash tree (for U) private avoiding frequent updates in the public parameters.} maintained for U)
are maintained to ensure the integrity of the blocks
in the \textit{hierarchical log} H (one for the entire log or one for each of its levels). To enable a third party
auditor (TPA) to audit the data blocks residing at different levels of this log, the root-digests of
these trees need to be made public. However, some of these root-digests are changed as often as new data blocks are inserted
in the hierarchical log structure, thus resulting in
a change in the public parameters for \textit{each} write. This incurs an additional (non-trivial) overhead for
validation and certification of the public parameters for every write operation. On the other hand, the public
parameters in our publicly verifiable dynamic POR scheme
are fixed throughout the execution of the protocols involved.

\begin{table}[t]
\small
\centering
\caption{Typical values of the parameters used}
\label{tab:parameters}
\begin{tabular}{|c|c|c|}
\hline
Parameter & Description of parameter & Value \\
\hline
\hline
$\lambda$		& Security parameter (in bits)		& 128 \\
\hline
$\lambda_p$		& Size of prime $p$ (in bits)		& 1024 \\
\hline
$\lambda_q$		& Size of prime $q$ (in bits)		& 257 \\
\hline
$\beta$			& Size of a data block (in KB)		& 64 \\
\hline
{\multirow{2}{*}{$m$}}	& $\ceil{\beta/(\lambda_q-1)}$  	& {\multirow{2}{*}{128}} \\
			& $=$ number of segments in a block 	&  \\
\hline
\end{tabular}
\end{table}

Apart from the schemes listed in Table~\ref{tab:comparison_POR}, we mention some POR schemes
proposed recently that handle data dynamics as follows.
The dynamic POR scheme proposed by Guan et al.~\cite{Guan_ESORICS}
uses the notion of indistinguishability obfuscation ($i\mathcal{O}$)~\cite{BarakIO_CR,GargIO_FOCS}
to construct a publicly verifiable POR scheme from the privately verifiable scheme of Shacham and Waters~\cite{SW_ACR}.
It also handles dynamic data using a ``modified B+ Merkle tree''.
However, the $i\mathcal{O}$ candidates available in the literature are not currently practical.
Ren et al.~\cite{Ren_TSC15} propose a dynamic POR scheme where the data file is encoded
using erasure coding (intra-server encoding) and network coding (inter-server encoding).
The encoded blocks are then disseminated among multiple storage servers.
Use of network coding reduces the communication bandwidth required for a repair in case of a node (server) failure.
For the intra-server encoding, each block is divided into some sub-blocks (using an erasure code),
and a ``range-based 2-3 tree'' (rb23Tree) is built upon these sub-blocks for each server.
This ensures the authenticity and freshness properties of the blocks within a server.
We note that each block is encoded (locally) into a few number of sub-blocks for the intra-server encoding.
Therefore, an update in a block (or in any of its sub-blocks) requires updating only a few sub-blocks corresponding to that block.
This makes an update in this scheme efficient. On the other hand, a malicious server needs to delete only a few sub-blocks
to actually make a block unavailable.
Thus, the dynamic POR scheme proposed by Ren et al.~\cite{Ren_TSC15} differs from our scheme on the basis of the granularity of data the client needs.

\section{Conclusion}
\label{sec:conclusion}

In this work, we have proposed a dynamic POR scheme where the client can update her data file
after the initial outsourcing of the file to the cloud server and retrieve all of her data at
any point of time. Our scheme is publicly verifiable, that is, anyone having the knowledge of
the public parameters of the scheme can perform an audit on the client's behalf, and
it offers security guarantees of a dynamic POR scheme. This scheme
is more efficient (in terms of the cost of a write or an audit) than other
practical and publicly verifiable dynamic POR schemes with a similar data granularity.


\begin{thebibliography}{10}
\providecommand{\url}[1]{\texttt{#1}}
\providecommand{\urlprefix}{URL }

\bibitem{Outpor_CCS}
Armknecht, F., Bohli, J., Karame, G.O., Liu, Z., Reuter, C.A.: Outsourced
  proofs of retrievability. In: {ACM} Conference on Computer and Communications
  Security, {CCS} 2014. pp. 831--843 (2014),
  \url{http://doi.acm.org/10.1145/2660267.2660310}

\bibitem{Ateniese_CCS}
Ateniese, G., Burns, R.C., Curtmola, R., Herring, J., Kissner, L., Peterson,
  Z.N.J., Song, D.X.: Provable data possession at untrusted stores. In: {ACM}
  Conference on Computer and Communications Security, {CCS} 2007. pp. 598--609
  (2007), \url{http://doi.acm.org/10.1145/1315245.1315318}

\bibitem{Ateniese_SCOM}
Ateniese, G., Pietro, R.D., Mancini, L.V., Tsudik, G.: Scalable and efficient
  provable data possession. In: International Conference on Security and
  Privacy in Communication Networks, {SECURECOMM} 2008. p.~9 (2008),
  \url{http://doi.acm.org/10.1145/1460877.1460889}

\bibitem{BarakIO_CR}
Barak, B., Goldreich, O., Impagliazzo, R., Rudich, S., Sahai, A., Vadhan, S.P.,
  Yang, K.: On the (im)possibility of obfuscating programs. In: Advances in
  Cryptology - {CRYPTO} 2001. pp. 1--18 (2001),
  \url{http://dx.doi.org/10.1007/3-540-44647-8_1}

\bibitem{IncCrypto_CR}
Bellare, M., Goldreich, O., Goldwasser, S.: Incremental cryptography: The case
  of hashing and signing. In: Advances in Cryptology - {CRYPTO} 1994. pp.
  216--233 (1994), \url{http://dx.doi.org/10.1007/3-540-48658-5_22}

\bibitem{CBC_MAC}
Bellare, M., Kilian, J., Rogaway, P.: The security of the cipher block chaining
  message authentication code. Journal of Computer and System Sciences  61(3),
  362--399 (2000), \url{http://dx.doi.org/10.1006/jcss.1999.1694}

\bibitem{Dlog_Bellare}
Bellare, M., Rogaway, P.: Number-theoretic primitives,
  \url{https://cseweb.ucsd.edu/~mihir/cse207/w-ntp.pdf}

\bibitem{FFT}
Berlekamp, E.R.: Fast {F}ourier transform.
  \url{http://math.berkeley.edu/~berlek/classes/CLASS.110/LECTURES/FFT}

\bibitem{BLS_JOC}
Boneh, D., Lynn, B., Shacham, H.: Short signatures from the {Weil} pairing.
  Journal of Cryptology  17(4),  297--319 (September 2004)

\bibitem{Bowers_HAIL}
Bowers, K.D., Juels, A., Oprea, A.: {HAIL}: A high-availability and integrity
  layer for cloud storage. In: {ACM} Conference on Computer and Communications
  Security, {CCS} 2009. pp. 187--198 (2009),
  \url{http://doi.acm.org/10.1145/1653662.1653686}

\bibitem{Bowers_CCSW}
Bowers, K.D., Juels, A., Oprea, A.: Proofs of retrievability: {T}heory and
  implementation. In: {ACM} Cloud Computing Security Workshop, {CCSW} 2009. pp.
  43--54 (2009), \url{http://doi.acm.org/10.1145/1655008.1655015}

\bibitem{Wichs_ORAM}
Cash, D., K{\"{u}}p{\c{c}}{\"{u}}, A., Wichs, D.: Dynamic proofs of
  retrievability via oblivious {RAM}. In: Advances in Cryptology - {EUROCRYPT}
  2013. pp. 279--295. Springer Berlin Heidelberg (2013),
  \url{http://dx.doi.org/10.1007/978-3-642-38348-9_17}

\bibitem{Bhavana_TCC}
Chandran, N., Kanukurthi, B., Ostrovsky, R.: Locally updatable and locally
  decodable codes. In: Theory of Cryptography Conference, {TCC} 2014. pp.
  489--514 (2014), \url{http://dx.doi.org/10.1007/978-3-642-54242-8_21}

\bibitem{CurtmolaGKO06}
Curtmola, R., Garay, J.A., Kamara, S., Ostrovsky, R.: Searchable symmetric
  encryption: Improved definitions and efficient constructions. In: {ACM}
  Conference on Computer and Communications Security, {CCS} 2006. pp. 79--88
  (2006), \url{http://doi.acm.org/10.1145/1180405.1180417}

\bibitem{Curtmola_ICDCS}
Curtmola, R., Khan, O., Burns, R.C., Ateniese, G.: {MR-PDP}: Multiple-replica
  provable data possession. In: {IEEE} International Conference on Distributed
  Computing Systems, {ICDCS} 2008. pp. 411--420 (2008),
  \url{http://dx.doi.org/10.1109/ICDCS.2008.68}

\bibitem{DH_ITIT}
Diffie, W., Hellman, M.E.: New directions in cryptography. {IEEE} Transactions
  on Information Theory  22(6),  644--654 (1976),
  \url{https://doi.org/10.1109/TIT.1976.1055638}

\bibitem{Wichs_HA}
Dodis, Y., Vadhan, S.P., Wichs, D.: Proofs of retrievability via hardness
  amplification. In: Theory of Cryptography Conference, {TCC} 2009. pp.
  109--127 (2009), \url{http://dx.doi.org/10.1007/978-3-642-00457-5_8}

\bibitem{ElGamal_CR}
ElGamal, T.: A public key cryptosystem and a signature scheme based on discrete
  logarithms. In: Blakley, G., Chaum, D. (eds.) Advances in Cryptology -
  {CRYPTO} 1984, Lecture Notes in Computer Science, vol. 196, pp. 10--18.
  Springer Berlin Heidelberg (1985),
  \url{http://dx.doi.org/10.1007/3-540-39568-7_2}

\bibitem{Erway_TISSEC}
Erway, C.C., K{\"{u}}p{\c{c}}{\"{u}}, A., Papamanthou, C., Tamassia, R.:
  Dynamic provable data possession. {ACM} Transactions on Information and
  System Security  17(4),  15:1--15:29 (2015),
  \url{http://doi.acm.org/10.1145/2699909}

\bibitem{Galbraith_DAM}
Galbraith, S.D., Paterson, K.G., Smart, N.P.: Pairings for cryptographers.
  Discrete Applied Mathematics  156(16),  3113--3121 (Sep 2008),
  \url{http://dx.doi.org/10.1016/j.dam.2007.12.010}

\bibitem{GargIO_FOCS}
Garg, S., Gentry, C., Halevi, S., Raykova, M., Sahai, A., Waters, B.: Candidate
  indistinguishability obfuscation and functional encryption for all circuits.
  In: {IEEE} Symposium on Foundations of Computer Science, {FOCS} 2013. pp.
  40--49 (2013), \url{http://dx.doi.org/10.1109/FOCS.2013.13}

\bibitem{Goh03}
Goh, E.: Secure indexes. {IACR} Cryptology ePrint Archive p. 216 (2003),
  \url{http://eprint.iacr.org/2003/216}

\bibitem{ORAM_JACM}
Goldreich, O., Ostrovsky, R.: Software protection and simulation on oblivious
  {RAM}s. Journal of the {ACM}  43(3),  431--473 (1996),
  \url{http://doi.acm.org/10.1145/233551.233553}

\bibitem{GMR_ACM}
Goldwasser, S., Micali, S., Rivest, R.L.: A digital signature scheme secure
  against adaptive chosen-message attacks. SIAM Journal on Computing  17(2),
  281--308 (April 1988), \url{http://dx.doi.org/10.1137/0217017}

\bibitem{Guan_ESORICS}
Guan, C., Ren, K., Zhang, F., Kerschbaum, F., Yu, J.: Symmetric-key based
  proofs of retrievability supporting public verification. In: European
  Symposium on Research in Computer Security, {ESORICS} 2015. pp. 203--223
  (2015), \url{http://dx.doi.org/10.1007/978-3-319-24174-6_11}

\bibitem{Hard_Amp}
Impagliazzo, R., Jaiswal, R., Kabanets, V.: Approximately list-decoding direct
  product codes and uniform hardness amplification. In: {IEEE} Symposium on
  Foundations of Computer Science, {FOCS} 2006. pp. 187--196 (2006),
  \url{http://dx.doi.org/10.1109/FOCS.2006.13}

\bibitem{ECDSA}
Johnson, D., Menezes, A., Vanstone, S.A.: {T}he {E}lliptic {C}urve {D}igital
  {S}ignature {A}lgorithm {(ECDSA)}. International Journal of Information
  Security  1(1),  36--63 (2001), \url{http://dx.doi.org/10.1007/s102070100002}

\bibitem{JK_CCS}
Juels, A., Jr., B.S.K.: {POR}s: Proofs of retrievability for large files. In:
  {ACM} Conference on Computer and Communications Security, {CCS} 2007. pp.
  584--597 (2007), \url{http://doi.acm.org/10.1145/1315245.1315317}

\bibitem{ErasureCode_FAST12}
Khan, O., Burns, R.C., Plank, J.S., Pierce, W., Huang, C.: Rethinking erasure
  codes for cloud file systems: Minimizing {I/O} for recovery and degraded
  reads. In: {USENIX} conference on File and Storage Technologies, {FAST} 2012.
  p.~20 (2012),
  \url{https://www.usenix.org/conference/fast12/rethinking-erasure-codes-cloud-file-systems-minimizing-io-recovery-and-degraded}

\bibitem{Krohn_SP}
Krohn, M.N., Freedman, M.J., Mazi{\`{e}}res, D.: On-the-fly verification of
  rateless erasure codes for efficient content distribution. In: {IEEE}
  Symposium on Security and Privacy, {S{\&}P} 2004. pp. 226--240 (2004),
  \url{http://dx.doi.org/10.1109/SECPRI.2004.1301326}

\bibitem{MWSloane77}
MacWilliams, F.J., Sloane, N.J.A.: The Theory of Error-Correcting Codes.
  North-Holland Publishing Company (1977)

\bibitem{Dlog_McCurley}
McCurley, K.S.: The discrete logarithm problem. Proceedings of Symposia in
  Applied Mathematics  42,  49--74 (1990)

\bibitem{Merkle_CR}
Merkle, R.C.: A digital signature based on a conventional encryption function.
  In: Advances in Cryptology - {CRYPTO} 1987. pp. 369--378 (1987),
  \url{http://dx.doi.org/10.1007/3-540-48184-2_32}

\bibitem{Nakamoto08}
Nakamoto, S.: Bitcoin: A peer-to-peer electronic cash system (2008),
  \url{http://bitcoin.org/bitcoin.pdf}

\bibitem{NR_JACM}
Naor, M., Rothblum, G.N.: The complexity of online memory checking. Journal of
  the ACM  56(1),  2:1--2:46 (February 2009),
  \url{http://doi.acm.org/10.1145/1462153.1462155}

\bibitem{DSA}
NIST: {D}igital {S}ignature {S}tandard ({DSS}) (July 2013),
  \url{http://nvlpubs.nist.gov/nistpubs/FIPS/NIST.FIPS.186-4.pdf}

\bibitem{ErasureCode_FAST05_Tutorial}
Plank, J.S.: Erasure codes for storage applications. In: Tutorial, {USENIX}
  conference on File and Storage Technologies, {FAST} 2005 (2005),
  \url{https://web.eecs.utk.edu/~plank/plank/papers/FAST-2005.pdf}

\bibitem{RSCode}
Reed, I.S., Solomon, G.: Polynomial codes over certain finite fields. Journal
  of the Society for Industrial and Applied Mathematics  8(2),  300--304 (1960)

\bibitem{Ren_TSC15}
Ren, Z., Wang, L., Wang, Q., Xu, M.: Dynamic proofs of retrievability for coded
  cloud storage systems. IEEE Transactions on Services Computing  PP(99)
  (2015), dOI: 10.1109/TSC.2015.2481880

\bibitem{RSA_CACM}
Rivest, R.L., Shamir, A., Adleman, L.: A method for obtaining digital
  signatures and public-key cryptosystems. Communications of the ACM  21(2),
  120--126 (Feb 1978), \url{http://doi.acm.org/10.1145/359340.359342}

\bibitem{SW_ACR}
Shacham, H., Waters, B.: Compact proofs of retrievability. In: Advances in
  Cryptology - {ASIACRYPT} 2008. pp. 90--107 (2008),
  \url{http://dx.doi.org/10.1007/978-3-540-89255-7_7}

\bibitem{Stefanov_CCS}
Shi, E., Stefanov, E., Papamanthou, C.: Practical dynamic proofs of
  retrievability. In: {ACM} Conference on Computer and Communications Security,
  {CCS} 2013. pp. 325--336 (2013),
  \url{http://doi.acm.org/10.1145/2508859.2516669}

\bibitem{Spielman}
Spielman, D.A.: Linear-time encodable and decodable error-correcting codes.
  {IEEE} Transactions on Information Theory  42(6),  1723--1731 (1996),
  \url{http://dx.doi.org/10.1109/18.556668}

\bibitem{IRIS}
Stefanov, E., van Dijk, M., Juels, A., Oprea, A.: Iris: A scalable cloud file
  system with efficient integrity checks. In: Annual Computer Security
  Applications Conference, {ACSAC} 2012. pp. 229--238 (2012),
  \url{http://doi.acm.org/10.1145/2420950.2420985}

\bibitem{Wang_TC}
Wang, C., Chow, S.S.M., Wang, Q., Ren, K., Lou, W.: Privacy-preserving public
  auditing for secure cloud storage. {IEEE} Transactions on Computers  62(2),
  362--375 (2013), \url{http://doi.ieeecomputersociety.org/10.1109/TC.2011.245}

\bibitem{Wang_TPDS}
Wang, Q., Wang, C., Ren, K., Lou, W., Li, J.: Enabling public auditability and
  data dynamics for storage security in cloud computing. {IEEE} Transactions on
  Parallel and Distributed Systems  22(5),  847--859 (2011),
  \url{http://doi.ieeecomputersociety.org/10.1109/TPDS.2010.183}

\bibitem{Xu_ASIACCS}
Xu, J., Chang, E.: Towards efficient proofs of retrievability. In: {ACM}
  Symposium on Information, Computer and Communications Security, {ASIACCS}
  2012. pp. 79--80 (2012), \url{http://doi.acm.org/10.1145/2414456.2414503}

\end{thebibliography}
\end{document}